\newtheorem{theorem}{Theorem}
\newtheorem{lemma}{Lemma}
\DeclareMathAlphabet{\mathbit}{OML}{cmr}{bx}{it}
\DeclareMathAlphabet{\mathsf}{OT1}{cmss}{m}{n}
\DeclareMathAlphabet{\mathTXf}{OT1}{cmss}{bx}{it}
\DeclareMathOperator{\trace}{tr}
\DeclareMathOperator{\rZF}{rZF}
\DeclareMathOperator{\TX}{TX}
\DeclareMathOperator{\DCSI}{DCSI}
\DeclareMathOperator{\Rate}{R} 
\DeclareMathOperator{\CN}{\mathcal{N}_{\mathbb{C}}}
\DeclareMathOperator{\SINR}{SINR}
\newcommand{\bA}{\mathbf{A}} 
\newcommand{\bC}{\mathbf{C}} 
\newcommand{\bE}{\mathbf{E}}
\newcommand{\bH}{\mathbf{H}} 
\newcommand{\bI}{\mathbf{I}}
\newcommand{\bL}{\mathbf{L}}
\newcommand{\bP}{\mathbf{P}} 
\newcommand{\bQ}{\mathbf{Q}} 
\newcommand{\bR}{\mathbf{R}} 
\newcommand{\bT}{\mathbf{T}} 
\newcommand{\bU}{\mathbf{U}} 
\newcommand{\bV}{\mathbf{V}}
\newcommand{\be}{\bm{e}}
\newcommand{\bh}{\bm{h}}
\newcommand{\bq}{\bm{q}} 
\newcommand{\bs}{\bm{s}} 
\newcommand{\bt}{\bm{t}}
\newcommand{\bz}{\bm{z}}
\newcommand{\LB}{\left(}
\newcommand{\RB}{\right)}
\newcommand{\LSB}{\left[}
\newcommand{\RSB}{\right]}
\newcommand{\I}{\mathbf{I}} %identity matrix
\newcommand{\norm}[1]{\lVert{#1}\rVert}
\newcommand{\Fro}{{\mathrm{F}}}
\newcommand{\E}{{\mathbb{E}}}
\newcommand{\trans}{{\text{T}}} 
\newcommand{\He}{{{\mathrm{H}}}}
\newcommand{\CCSI}{{\mathrm{CCSI}}}
\newcommand{\xv}{\mathbf{x}}
\newcommand{\yv}{\mathbf{y}}
\theoremstyle{remark}
\newtheorem{remark}{Remark} %[section]
\theoremstyle{example}
\begin{document} 
\title{Regularized ZF in Cooperative Broadcast Channels under Distributed CSIT: A Large System Analysis} 
\author{\IEEEauthorblockN{Paul de Kerret\IEEEauthorrefmark{1}, David Gesbert\IEEEauthorrefmark{1}, and Umer Salim\IEEEauthorrefmark{3}}

%%%%%%\IEEEauthorblockA{\IEEEauthorrefmark{2}
%%%%%%T\'el\'ecom Bretagne, IMT, UMR CNRS 3192 Lab-STICC} 

\IEEEauthorblockA{\IEEEauthorrefmark{1}
Mobile Communication Department, Eurecom}

\IEEEauthorblockA{\IEEEauthorrefmark{3}Intel Mobile Communications}
}

\maketitle
\begin{abstract}
Obtaining accurate Channel State Information (CSI) at the transmitters (TX) is critical to many cooperation schemes such as Network MIMO, Interference Alignment etc. Practical CSI feedback and limited backhaul-based sharing inevitably creates degradations of CSI which are \emph{specific} to each TX, giving rise to a \emph{distributed} form of CSI. In the Distributed CSI (D-CSI) broadcast channel setting, the various TXs design elements of the precoder based on their individual estimates of the global multiuser channel matrix, which intuitively degrades performance when compared with the commonly used centralized CSI assumption. This paper tackles this challenging scenario and presents a first analysis of the rate performance for the distributed CSI multi-TX broadcast channel setting, in the large number of antenna regime. Using Random Matrix Theory (RMT) tools, we derive deterministic equivalents of the Signal to Interference plus Noise Ratio (SINR) for the popular regularized Zero-Forcing (ZF) precoder, allowing to unveil the \emph{price of distributedness} for such cooperation methods.
\end{abstract}

\begin{IEEEkeywords}
Multiuser channels, Cooperative communication, MIMO, Feedback Communications
\end{IEEEkeywords}

%%%%%%%%%%%%%%%%%%%%%%%%%%%%%%%%%%%%%%%%%%%%%%
%%%%%%%%%%%%%%%%%%%%%%%%%%%%%%%%%%%%%%%%%%%%%%
%%%%%%%%%%%%%%%%%%%%%%%%%%%%%%%%%%%%%%%%%%%%%%
%%%%%%%%%%%%%%%%%%%%%%%%%%%%%%%%%%%%%%%%%%%%%%
\section{Introduction}
%%%%%%%%%%%%%%%%%%%%%%%%%%%%%%%%%%%%%%%%%%%%%%
%%%%%%%%%%%%%%%%%%%%%%%%%%%%%%%%%%%%%%%%%%%%%%
%%%%%%%%%%%%%%%%%%%%%%%%%%%%%%%%%%%%%%%%%%%%%%
%%%%%%%%%%%%%%%%%%%%%%%%%%%%%%%%%%%%%%%%%%%%%%

Network (or Multi-cell) MIMO methods, whereby multiple interfering TXs share user messages and allow for joint precoding, are currently considered for next generation wireless networks \cite{Gesbert2010}. With perfect message and CSI sharing, the different TXs can be seen as a unique virtual multiple-antenna array serving all RXs in a multiple-antenna broadcast channel (BC) fashion, and well known precoding algorithms from the literature can be used \cite{Karakayali2006}. Joint precoding however requires the feedback of an accurate multi-user CSI to each TX in order to achieve near optimal sum rate performance\cite{Jindal2006}.

Although the case of imperfect, noisy or delayed, CSI has been considered in past work \cite{Jindal2006,MaddahAli2012}, literature typically assumes \emph{centralized} CSIT, i.e., that the precoding is done on the basis of a \emph{single} imperfect channel estimate which is common at every TX. Although meaningful in the case of a broadcast with a single transmit device, this assumption can be challenged when the joint precoding is carried out across distant TXs linked by heterogeneous and imperfect backhaul links or having to communicate without backhaul (over the air) among each other, as in the case of direct device-to-device cooperation. In these cases, it is expected that the CSI exchange will introduce further delay and quantization noise. It is thus practically relevant for joint precoding across distant TXs to consider a CSI setting where each TX receives its \emph{own} channel estimate. This setting is referred to as distributed CSI (D-CSI) in the rest of this paper. % which we denote as the \emph{distributed CSI} configuration \cite{dekerret2013_thesis}.

From an information theoretic perspective, the study of transmitter cooperation in the D-CSI broadcast channel setting raises several intriguing and challenging questions.

First, the capacity region of the broadcast channel under a general D-CSI setting is unknown. In \cite{dekerret2012_TIT}, a rate characterization at high SNR is carried out using DoF analysis for the two transmitters scenario. This study highlighted the severe penalty caused by the lack of a consistent CSI shared by the cooperating TXs from a DoF point of view, when using a conventional precoder. It was also shown that classical (regularized) robust  precoders \cite{Peel2005} do not restore the DoF. Although a new DoF-restoring decentralized precoding strategy was presented in \cite{dekerret2012_TIT} for the two TXs case, the general case of more than $2$ TXs remains open. At finite SNR, the problem of designing precoders that optimally tackle the D-CSI setting is open for any number of TXs. The use of conventional linear precoders that are unaware of the D-CSI structure is expected to yield a loss with respect to a centralized (even imperfect) CSI setting. Yet the quantifying of this loss in the finite SNR region has not been addressed previously. This is precisely the question addressed by this paper.  
 
We study the average rate achieved when the number of transmit antennas and the number of receive antennas jointly grow large with a fixed ratio, thus allowing to use efficient tools from the field of RMT. Although RMT has been applied in many works to the analysis of wireless communications [See~\cite{Hochwald2002,Tulino2007,Wagner2012,Muller2013,Couillet2011} among others], its role in helping to analyze cooperative systems with distributed information has not been highlighted before.

Our main contribution consists in providing a deterministic equivalent for the average rate per user in a D-CSI setting where each TX receives its \emph{own} estimate of the global multi-user channel matrix with the quality (in a statistical sense) of this estimate varying from TX to TX.
%%%\item Building upon results from \cite{Wagner2012,Muller2013}, we have developed a new deterministic equivalent for a term involving two resolvent matrices related to two different channel estimates. 
%%\end{itemize}

%%%%%%%%%%%%%%%%%%%%%%%%%%%%%%%%%%%%%%%%%%%%%%
%%%%%%%%%%%%%%%%%%%%%%%%%%%%%%%%%%%%%%%%%%%%%%
%%%%%%%%%%%%%%%%%%%%%%%%%%%%%%%%%%%%%%%%%%%%%%
%%%%%%%%%%%%%%%%%%%%%%%%%%%%%%%%%%%%%%%%%%%%%%
\section{System Model}\label{se:SM}
%%%%%%%%%%%%%%%%%%%%%%%%%%%%%%%%%%%%%%%%%%%%%%
%%%%%%%%%%%%%%%%%%%%%%%%%%%%%%%%%%%%%%%%%%%%%%
%%%%%%%%%%%%%%%%%%%%%%%%%%%%%%%%%%%%%%%%%%%%%%
%%%%%%%%%%%%%%%%%%%%%%%%%%%%%%%%%%%%%%%%%%%%%%

\subsection{Transmission Model}
%%%%%%%%%%%%%%%%%%%%%%%%%%%%%%%%%%%%%%%%%%%%%%
%%%%%%%%%%%%%%%%%%%%%%%%%%%%%%%%%%%%%%%%%%%%%%
We study a so-called Network MIMO transmission where $n$~TXs serve jointly $K$~receivers (RXs). In order to simplify our analysis  we restrict ourselves to linear precoding structures. Each TX is equipped with $M_{\TX}$~antennas such that the total number of transmit antennas is denoted by $M\triangleq n M_{\TX}$. Every RX is equipped with a single-antenna. We assume that the ratio of transmit antennas with respect to the number of users is fixed and given by 
\begin{equation}
\beta\triangleq \frac{M}{K}\geq 1.
\end{equation}
%We consider that the scheduling step has already be done such that $\beta\geq 1$ in order to efficiently reduce interference.

We further assume that the RXs have perfect CSI so as to focus on the imperfect CSI feedback and exchange among the TXs. We consider that the RXs treat interference as noise. The channel from the $n$~TXs to the $K$~RXs is represented by the multi-user channel matrix~$\mathbf{H} \in \mathbb{C}^{K\times M}$. 

The transmission is then described as
\begin{equation}
\begin{bmatrix}
y_1\\\vdots\\y_K
\end{bmatrix}
=
\mathbf{H}\xv
+
\bm{\eta}
=
\begin{bmatrix}
\bh_1^{\He}\xv\\\vdots\\
\bh_K^{\He}\xv
\end{bmatrix}
+
\begin{bmatrix}
\eta_1\\\vdots\\
\eta_K
\end{bmatrix}
\label{eq:SM_2}
\end{equation}
where $y_i\in \mathbb{C}$ is the signal received at the $i$-th RX, $\bh^{\He}_i=\bm{e}_i^{\He}\bH \in\mathbb{C}^{1\times M}$ is the channel from all transmit antennas to RX~$i$, and $\bm{\eta}\triangleq[\eta_1,\ldots,\eta_K]^{\trans}\in \mathbb{C}^{K\times 1}$ is the normalized Gaussian noise with its elements i.i.d. as $\mathcal{CN}(0,1)$. 

The transmitted multi-user signal~$\xv\in \mathbb{C}^{M\times 1}$ is obtained from the symbol vector $\bs\triangleq[\bs_1^{\trans},\ldots,\bs_K^{\trans}]^{\trans}  \in  \mathbb{C}^{K\times 1}$ with its elements i.i.d. $\CN(0,1)$ as
\begin{equation}
\xv=\bT \bs=
\begin{bmatrix}
\bt_1,
\hdots,
\bt_K
\end{bmatrix}
\begin{bmatrix}
\bs_1\\
\vdots\\
\bs_K
\end{bmatrix}
\label{eq:SM_3}
\end{equation}   
with $\bT \in \mathbb{C}^{M\times K}$ being the \emph{multi-user} precoder and $\bt_i\triangleq\bT\be_i \in \mathbb{C}^{M\times 1}$ being the beamforming vector used to transmit to RX~$i$. We consider for clarity the sum power constraint~$\|\bT\|_{\Fro}^2=P$.

Our main figure-of-merit is the average rate per user
\begin{equation}
\Rate\triangleq\frac{1}{K}\sum_{k=1}^K\E\LSB \log_2\LB 1+\SINR_k\RB\RSB
\label{eq:SM_4}
\end{equation} 
where~$\SINR_k$ denotes the Signal-to-Interference and Noise Ratio (SINR) at RX~$k$ and is defined as
\begin{equation}
\SINR_k\triangleq\frac{\left|\bh_k^{\He}\bt_k\right|^2}{1+\sum_{\ell=1,\ell \neq k}^K\left|\bh_k^{\He}\bt_{\ell}\right|^2}.
\label{eq:SM_5}
\end{equation} 
%%%%%%%%%%%%%%%%%%%%%%%%%%%%%%%%%%%%%%%%%%%%%%
%%%%%%%%%%%%%%%%%%%%%%%%%%%%%%%%%%%%%%%%%%%%%%
\subsection{Distributed CSIT Model}
%%%%%%%%%%%%%%%%%%%%%%%%%%%%%%%%%%%%%%%%%%%%%%
%%%%%%%%%%%%%%%%%%%%%%%%%%%%%%%%%%%%%%%%%%%%%%
In the distributed CSIT model studied here, each TX receives its own CSI based on which it designs its transmission parameters without any additional communication to the other TXs. %The actual feedback and exchange mechanism based on which the TXs receive the multi-user channel estimate is completely arbitrary, yet it can itself be the topic of some interesting trade-off and optimization beyond the scope of this paper \cite{Kobayashi2011,Cheng2010}. 
Specifically, TX~$j$ receives the multi-user channel estimate~$\hat{\bH}^{(j)} \in \mathbb{C}^{K\times M}$ and designs its transmit coefficients~$\xv_j\in \mathbb{C}^{M_{\TX}\times 1}$ solely as a function of~$\hat{\bH}^{(j)}$. For ease of exposition, we assume that the imperfect multi-user channel estimate is modeled by
\begin{align}
\hat{\bH}^{(j)}\triangleq \sqrt{1-(\sigma^{(j)})^2}\bH+\sigma^{(j)}\bm{\Delta}^{(j)}
\label{eq:SM_6}
\end{align}
with~$\bm{\Delta}^{(j)}\in \mathbb{C}^{K\times M}$ having its elements i.i.d.~$\CN(0,1)$. We then denote by~$(\hat{\bh}_k^{(j)})^{\He}$ the $k$th row of $\hat{\bH}^{(j)}$, i.e., the estimate at TX~$j$ of the channel from all the transmit antennas to RX~$k$. The approach described in this work extends to a more general information structure allowing for a non uniform description quality at each TX as well as correlation in the multi-user channel. The calculations will be provided in an upcoming work.
\begin{remark}
The D-CSI model encompasses the imperfect centralized CSI model by taking~$n=1$.\qed
\end{remark}

%%%%%%%%%%%%%%%%%%%%%%%%%%%%%%%%%%%%%%%%%%%%%%
%%%%%%%%%%%%%%%%%%%%%%%%%%%%%%%%%%%%%%%%%%%%%%
\subsection{Regularized Zero Forcing with Distributed CSIT}\label{se:SM:ZF}
%%%%%%%%%%%%%%%%%%%%%%%%%%%%%%%%%%%%%%%%%%%%%%
%%%%%%%%%%%%%%%%%%%%%%%%%%%%%%%%%%%%%%%%%%%%%%
We address the performance of a classical MISO broadcast precoder, namely \emph{regularized ZF} \cite{Spencer2004,Peel2005}, when faced with distributed CSIT in the large system regime. Hence, the precoder designed at TX~$j$ is assumed to take the form 
\begin{align}
\bT_{\rZF}^{(j)}\triangleq  \LB \hat{\bH}^{(j)}(\hat{\bH}^{(j)})^{\He}+M\alpha \I_{M} \RB ^{-1} \hat{\bH}^{(j)} \frac{\sqrt{P}}{\sqrt{\Psi}}
\label{eq:SM_7}
\end{align}
with regularization factor $\alpha>0$. We also define
\begin{align}
\bC^{(j)}\triangleq  \frac{ \hat{\bH}^{(j)}(\hat{\bH}^{(j)})^{\He}}{M}+\alpha\I_{M} 
\label{eq:SM_8}
\end{align}
such that the precoder at TX~$j$ can be rewritten as
\begin{align}
\bT_{\rZF}^{(j)}=  \frac{1}{M}(\bC^{(j)})^{-1} \hat{\bH}^{(j)} \frac{\sqrt{P}}{\sqrt{\Psi^{(j)}}}.
\label{eq:SM_9}
\end{align}
The scalar~$\Psi^{(j)}$ corresponds to the power normalization at TX~$j$. Hence, it holds that
\begin{align}
\Psi^{(j)}&=\norm{\LB \hat{\bH}^{(j)}(\hat{\bH}^{(j)})^{\He}+M\alpha \I_{M} \RB ^{-1}\hat{\bH}^{(j)}}^2_{\Fro}.
\label{eq:SM_10}
\end{align}
%%%%%%\begin{remark}
%%%%%%Because of the distributed precoding, it is not clear a priori whether the normalization done at each TX leads to fulfill the power constraint. However, it will become clear that in the large antenna regime, these quantities become deterministic and the power constraint can then be fulfilled with equality. \qed
%%%%%%\end{remark}
Upon concatenation of all TX's precoding vectors, the effective global precoder denoted by $\bT_{\rZF}^{\DCSI}$, is equal to
\begin{equation}
\bT_{\rZF}^{\DCSI} \triangleq 
\begin{bmatrix} 
\bE_1^{\He}\bT_{\rZF}^{(1)}\\
\bE_2^{\He}\bT_{\rZF}^{(2)}\\
\vdots\\
\bE_K^{\He}\bT_{\rZF}^{(K)}
\end{bmatrix}
\label{eq:SM_10}
\end{equation} 
where $\bE_j^{\He}\in \mathbb{C}^{M_{\TX}\times M}$ is defined as
\begin{equation}
\bE_j^{\He}\triangleq 
\begin{bmatrix}
\bm{0}_{M_{\TX}\times (j-1)M_{\TX}}&\I_{M_{\TX}}&\bm{0}_{M_{\TX}\times (n-j)M_{\TX}}
\end{bmatrix}.
\label{eq:SM_11}
\end{equation} 
We furthermore denote the $k$th column of $\bT_{\rZF}^{\DCSI}$ (used to serve RX~$k$) by $\bt_{\rZF,k}^{\DCSI}$.

Although the finite SNR rate analysis under the precoding structure \eqref{eq:SM_10} and the distributed CSI model in \eqref{eq:SM_6} is challenging in the general case because of the dependency of one user performance on a all channel estimates, some useful insights can be obtained in the large antenna regime, as shown below.

%%%%%%%%%%%%%%%%%%%%%%%%%%%%%%%%%%%%%%%%%%%%%%%%%%%%%%%%%%%%%%%%%%%%%%%%%%%%%
%%%%%%%%%%%%%%%%%%%%%%%%%%%%%%%%%%%%%%%%%%%%%%%%%%%%%%%%%%%%%%%%%%%%%%%%%%%%%
%%%%%%%%%%%%%%%%%%%%%%%%%%%%%%%%%%%%%%%%%%%%%%%%%%%%%%%%%%%%%%%%%%%%%%%%%%%%%
%%%%%%%%%%%%%%%%%%%%%%%%%%%%%%%%%%%%%%%%%%%%%%%%%%%%%%%%%%%%%%%%%%%%%%%%%%%%%
\section{Deterministic Equivalent of the SINR}\label{se:main}
%%%%%%%%%%%%%%%%%%%%%%%%%%%%%%%%%%%%%%%%%%%%%%%%%%%%%%%%%%%%%%%%%%%%%%%%%%%%%
%%%%%%%%%%%%%%%%%%%%%%%%%%%%%%%%%%%%%%%%%%%%%%%%%%%%%%%%%%%%%%%%%%%%%%%%%%%%%
%%%%%%%%%%%%%%%%%%%%%%%%%%%%%%%%%%%%%%%%%%%%%%%%%%%%%%%%%%%%%%%%%%%%%%%%%%%%%
%%%%%%%%%%%%%%%%%%%%%%%%%%%%%%%%%%%%%%%%%%%%%%%%%%%%%%%%%%%%%%%%%%%%%%%%%%%%%
The Stieltjes transform has proven very useful many times in the analysis of wireless networks \cite{Wagner2012,Muller2013} and we will also follow this approach. Hence, our approach will be based on the following fundamental result.
\begin{theorem}{\cite{Hachem2010,Muller2013}}
\label{fundamental_theorem}
Consider the resolvent matrix $\bQ\triangleq\LB\frac{\bH^{\He}\bH}{M} +\alpha\I_M\RB^{-1}$ with the matrix~$\bH$ defined according to Section~\ref{se:SM} and $\alpha>0$. Then the equation
\begin{equation}
x=\frac{1}{M}\trace \LB \LB \alpha\bI_M+\frac{\I_M}{\beta \LB 1+x \RB }\RB^{-1}\RB 
\label{eq:main_1}
\end{equation}
admits a unique fixed point which we will denote by~$\delta$ in the following. Let 
\begin{equation}
\bQ_o\triangleq \LB \alpha\bI_M+\frac{\I_M}{\beta \LB 1+\delta\RB }\RB^{-1}
\label{eq:main_2}
\end{equation}
and let the matrix~$\bU$ be any matrix with bounded spectral norm. Then,
\begin{equation}
\frac{1}{M}\trace \LB \bU\bQ \RB -\frac{1}{M}\trace \LB \bU\bQ_o \RB    \xrightarrow[K,M\rightarrow \infty]{a.s.}0.
\label{eq:main_3}
\end{equation} 
\end{theorem}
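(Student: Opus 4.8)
The plan is to prove this as a standard deterministic-equivalent result in two stages: first establish existence and uniqueness of the fixed point $\delta$ of \eqref{eq:main_1}, then prove the trace convergence \eqref{eq:main_3} by the resolvent (leave-one-out) method. Throughout I would exploit the fact that the data matrix $\bH$ here has i.i.d.\ entries with no correlation structure, which both simplifies $\bQ_o$ to a scalar multiple of the identity and makes several fluctuation estimates fall into the classical Marchenko--Pastur/Silverstein regime covered by \cite{Hachem2010,Muller2013}.

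For the fixed point, note that since $\alpha\I_M+\frac{\I_M}{\beta(1+x)}$ is a scalar multiple of the identity, the right-hand side of \eqref{eq:main_1} collapses to the scalar map $f(x)=\left(\alpha+\frac{1}{\beta(1+x)}\right)^{-1}$. I would verify that $f$ sends $[0,\infty)$ into itself, is strictly increasing, and is bounded above by $1/\alpha$; equivalently, clearing denominators shows that the equation $x=f(x)$ reduces to a quadratic $\alpha\beta x^2+(\alpha\beta+1-\beta)x-\beta=0$ whose roots have product $-1/\alpha<0$, hence exactly one positive root. This gives existence and uniqueness of $\delta$ at once. (The monotone ``standard interference function'' viewpoint is the robust tool that would also cover the correlated generalization promised later in the paper.)

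For the convergence \eqref{eq:main_3}, I would start from the resolvent identity $\bQ_o-\bQ=\bQ_o\left(\bQ^{-1}-\bQ_o^{-1}\right)\bQ$ together with $\bQ^{-1}-\bQ_o^{-1}=\frac{\bH^{\He}\bH}{M}-\frac{\I_M}{\beta(1+\delta)}$, which yields
\[
\frac{1}{M}\trace\left(\bU(\bQ_o-\bQ)\right)=\frac{1}{M}\trace\left(\bU\bQ_o\left(\frac{\bH^{\He}\bH}{M}-\frac{\I_M}{\beta(1+\delta)}\right)\bQ\right).
\]
Expanding $\bH^{\He}\bH=\sum_{k=1}^{K}\bh_k\bh_k^{\He}$ and applying the Sherman--Morrison identity $\bQ\bh_k=\bQ_{[k]}\bh_k\big/\left(1+\frac{1}{M}\bh_k^{\He}\bQ_{[k]}\bh_k\right)$, where $\bQ_{[k]}=\left(\frac{1}{M}\sum_{\ell\neq k}\bh_\ell\bh_\ell^{\He}+\alpha\I_M\right)^{-1}$ is the leave-one-out resolvent, reduces everything to the scalar $\frac{1}{M}\bh_k^{\He}\bQ_{[k]}\bh_k$. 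The three key tools are: (i) the trace lemma, $\frac{1}{M}\bh_k^{\He}\bQ_{[k]}\bh_k-\frac{1}{M}\trace\bQ_{[k]}\to 0$ almost surely, since $\bh_k$ is independent of $\bQ_{[k]}$; (ii) the rank-one perturbation bound $\left|\frac{1}{M}\trace\left(\bU(\bQ-\bQ_{[k]})\right)\right|\le \|\bU\|/(\alpha M)$, which lets me swap $\bQ_{[k]}$ for $\bQ$; and (iii) the self-consistency $\frac{1}{M}\trace\bQ\to\delta$, which is exactly what \eqref{eq:main_1} encodes (note $\frac{1}{M}\trace\bQ_o=\delta$). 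The uniform bounds $\|\bQ\|\le 1/\alpha$, $\|\bQ_o\|\le 1/\alpha$ and $\|\bU\|$ bounded keep all error terms controlled.

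The main obstacle is upgrading these per-index estimates into a single \emph{almost-sure} statement rather than convergence in probability: the quadratic-form fluctuations must be shown to vanish summably. I would handle this by a martingale-difference decomposition of $\frac{1}{M}\trace(\bU\bQ)-\E\!\left[\frac{1}{M}\trace(\bU\bQ)\right]$ over the successive channel vectors, bounding the increments via the Burkholder inequality to obtain a variance of order $O(M^{-2})$, and then invoking Borel--Cantelli. Matching the remaining deterministic mean $\E\!\left[\frac{1}{M}\trace(\bU\bQ)\right]$ to $\frac{1}{M}\trace(\bU\bQ_o)$ through the self-consistent equation closes the argument. Because the entries are i.i.d., the required moment bounds are classical, so the cited results \cite{Hachem2010,Muller2013} apply directly.
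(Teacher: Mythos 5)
The paper contains no proof of Theorem~\ref{fundamental_theorem} to compare against: the result is imported directly from \cite{Hachem2010,Muller2013}, and the only related in-paper material is the fixed-point iteration recalled in Appendix~\ref{app:literature}. Your reconstruction is essentially correct and, notably, uses exactly the toolkit the paper itself lists for its \emph{other} derivations — your Sherman--Morrison step is Lemma~\ref{lemma_resolvent}, your tool (i) is Lemma~\ref{lemma_trace}, and your tool (ii) is Lemma~\ref{lemma_rank1} — so your argument is in the paper's spirit even though the paper delegates this statement. Your quadratic reduction $\alpha\beta x^2+(\alpha\beta+1-\beta)x-\beta=0$ with root product $-1/\alpha$ is a clean elementary shortcut for existence/uniqueness that is available only because $\bH$ has i.i.d.\ entries, so that \eqref{eq:main_1} is genuinely scalar; the cited general results (variance profiles, correlation) need the monotone-map argument you mention parenthetically. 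Two small points. First, your item (iii) is what must be \emph{established}, not invoked: the standard order is to show that any limit point of $\frac{1}{M}\trace\LB\bQ\RB$ satisfies \eqref{eq:main_1} and then apply your uniqueness — your closing sentence implicitly does this, but it should be made explicit to avoid circularity. Second, a technical slip: Burkholder with $O(1/M)$ increments over $K=\beta^{-1}M$ martingale differences yields variance $O(1/M)$, not $O(1/M^2)$; the summable quantity is the \emph{fourth} moment, of order $O(1/M^2)$ (the classical Bai--Silverstein route), or, since the entries are Gaussian, the Poincar\'e inequality gives variance $O(1/M^2)$ directly. Either repair makes Borel--Cantelli go through, so the slip is cosmetic rather than structural.
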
  
The fixed point~$\delta$ can easily be obtained by an iterative algorithm given in \cite{Wagner2012,Couillet2011} and recalled in Appendix~\ref{app:literature} for the sake of completeness. Using this theorem and the definition of~$\delta$, we can now state our main result.
\begin{theorem}
\label{theorem}
Considering the D-CSI BC described in Section~\ref{se:SM}, then $\SINR_k-\SINR_k^o\rightarrow 0$
with $\SINR_k^o$ defined as
\begin{align}
\SINR_k^o\triangleq \frac{\frac{1}{n}\sum_{j=1}^n  \sqrt{1-(\sigma^{(j)})^2} \frac{\delta}{1+\delta}}{I_k^o+\frac{\Gamma^o}{P}}
\label{eq:main_4}
\end{align} 
with~$I_k^o\in\mathbb{R}$ given by
\begin{align}
I_k^o&\triangleq\sum_{j=1}^n\frac{\Gamma^o}{(1+\delta)^2n^2}\LSB n+2\delta\LB -1+n+(\sigma^{(j)})^2\RB\notag \right.\\
&\left.+\delta^2\LB -1+n+(\sigma^{(j)})^2\RB+\delta^4\LB-(\sigma^{(j)})^6+(\sigma^{(j)})^8\RB\notag \right.\\
&\left.+(\sigma^{(j)})^5\sqrt{1-(\sigma^{(j)})^2}\sqrt{(\sigma^{(j)})^2-(\sigma^{(j)})^4} \RSB\notag        \\
&+\sum_{j=1}^n\sum_{j'=1,j'\neq j}^n\frac{\Gamma^o_{j,j'}\delta}{(1+\delta)^2n^2}\LSB-2+(\sigma^{(j)})^2+(\sigma^{(j')})^2\right.\notag \\
&\left.+\delta\LB-1+(\sigma^{(j)})^2+(\sigma^{(j')})^2 \RB \RSB       
\label{eq:main_5}
\end{align} 
while $\Gamma^o\in\mathbb{R}$ and $\Gamma^o_{j,j'}\in\mathbb{R}$ are respectively defined as
\begin{align}
\Gamma^o&\triangleq  \frac {\frac{  \delta^2}{\beta (1+\delta)}  \LSB  (1-\delta) + \frac{  \delta^2  }{\LB 1+ \delta\RB}\RSB}{ \LB 1-\frac{1}{\beta}   \frac{\delta^2}{\LB 1+ \delta\RB^2}   \RB }\label{eq:main_6}\\
\Gamma^o_{j,j'}&\triangleq  \frac {\frac{ \sqrt{\LB 1-(\sigma^{(j)})^2\RB\LB 1-(\sigma^{(j')})^2\RB}\delta^2}{\beta (1+\delta)}  \LSB  (1-\delta) + \frac{  \delta^2  }{\LB 1+ \delta\RB}\RSB}{ \LB 1-\frac{1}{\beta}  \LB 1-(\sigma^{(j)})^2\RB\LB 1-(\sigma^{(j')})^2\RB\frac{\delta^2}{\LB 1+ \delta\RB^2}   \RB }.
\label{eq:main_7}
\end{align} 
\end{theorem}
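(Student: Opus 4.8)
The plan is to compute a deterministic equivalent for $\SINR_k$ by attacking its numerator and denominator separately, exploiting the structure in which the global precoder $\bT_{\rZF}^{\DCSI}$ is assembled row-block-wise from per-TX regularized ZF precoders, each built from an independent noisy estimate $\hat{\bH}^{(j)}$. The central difficulty, and the reason the distributed setting is genuinely harder than the centralized one, is that the signal term $\bh_k^{\He}\bt_{\rZF,k}^{\DCSI}$ is a sum $\sum_j \bh_k^{\He}\bE_j\bE_j^{\He}\bt_{\rZF,k}^{(j)}$ whose blocks each depend on a \emph{different} resolvent $(\bC^{(j)})^{-1}$, while the true channel $\bh_k$ correlates with each $\hat{\bh}_k^{(j)}$ only through the common component $\sqrt{1-(\sigma^{(j)})^2}\,\bH$. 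First I would substitute the model \eqref{eq:SM_6} into the precoder \eqref{eq:SM_9} and write the inner products so that the dependence on $\bh_k$, on the per-TX estimation noise $\bm{\Delta}^{(j)}$, and on the regularized resolvents is made explicit.

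\textbf{Signal term.} The numerator $|\bh_k^{\He}\bt_{\rZF,k}^{\DCSI}|^2$ becomes, after averaging, the squared modulus of $\tfrac{1}{n}\sum_j \sqrt{1-(\sigma^{(j)})^2}\,\bh_k^{\He}(\bC^{(j)})^{-1}\hat{\bh}_k^{(j)}/\sqrt{\Psi^{(j)}}\cdot\sqrt{P}$ restricted through $\bE_j\bE_j^{\He}$. The key analytic tool here is the trace lemma together with the rank-one perturbation (matrix inversion) lemma: I would remove the $k$th column from each $\hat{\bH}^{(j)}$ to decouple $\hat{\bh}_k^{(j)}$ from the resolvent, turning $\bh_k^{\He}(\bC^{(j)})^{-1}\hat{\bh}_k^{(j)}$ into a deterministic quantity governed by $\delta$ via Theorem~\ref{fundamental_theorem}. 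The common-channel correlation $\E[\bh_k\hat{\bh}_k^{(j)\He}]=\sqrt{1-(\sigma^{(j)})^2}\,\I_M$ is what produces the factor $\sqrt{1-(\sigma^{(j)})^2}\,\tfrac{\delta}{1+\delta}$ and, crucially, lets the $n$ block contributions add \emph{coherently} in the numerator of \eqref{eq:main_4}, rather than in modulus-square.

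\textbf{Normalization and interference.} For $\Psi^{(j)}$ I would apply Theorem~\ref{fundamental_theorem} with $\bU$ chosen so that $\tfrac{1}{M}\Psi^{(j)}$ concentrates on a deterministic limit; since the estimation statistics are identical in distribution across $j$ (only the mean scaling differs), each $\Psi^{(j)}$ converges to the same quantity controlled by $\delta$ and $\beta$, which is what collapses into the common factor $\Gamma^o/P$ in \eqref{eq:main_4}. The interference term $\sum_{\ell\neq k}|\bh_k^{\He}\bt_{\rZF,\ell}^{\DCSI}|^2$ is the delicate part: expanding $\bt_{\rZF,\ell}^{\DCSI}$ into its blocks produces \emph{diagonal} terms (block $j$ against block $j$) and \emph{cross} terms (block $j$ against block $j'$), and the two families require the two different deterministic equivalents $\Gamma^o$ and $\Gamma^o_{j,j'}$. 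I would handle these via a second-order deterministic-equivalent argument (the derivative of the Stieltjes transform, i.e.\ $\delta'$ or equivalently the $\tfrac{1}{M}\trace(\bU\bQ\bV\bQ)$-type quantities), keeping careful track of which estimation-noise matrices $\bm{\Delta}^{(j)},\bm{\Delta}^{(j')}$ are independent and which share the common $\bH$; the independence across distinct $j\neq j'$ is precisely what makes the cross terms carry the factor $\sqrt{(1-(\sigma^{(j)})^2)(1-(\sigma^{(j')})^2)}$ rather than a single $\sigma$-dependence, yielding \eqref{eq:main_7}.

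\textbf{The main obstacle} will be the bookkeeping of the cross-TX interference terms: for $j\neq j'$ the two resolvents $(\bC^{(j)})^{-1}$ and $(\bC^{(j')})^{-1}$ are built from correlated-but-distinct matrices, so the relevant quantity is a bilinear form $\bh_k^{\He}(\bC^{(j)})^{-1}\bE_j\bE_j^{\He}\,\cdots\,\bE_{j'}\bE_{j'}^{\He}(\bC^{(j')})^{-1}\bh_k$ that cannot be reduced by a single application of Theorem~\ref{fundamental_theorem}; it requires iterating the rank-one removal on \emph{both} estimates simultaneously and using that $\bm{\Delta}^{(j)}\perp\bm{\Delta}^{(j')}$ to factor the expectation into a product of two scalar deterministic equivalents, each again expressed through $\delta$. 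Collecting the diagonal contributions (with their $(\sigma^{(j)})^2$, $(\sigma^{(j)})^5$ and $(\sigma^{(j)})^8$ polynomial corrections) and the off-diagonal ones into $I_k^o$, then assembling numerator, $\Psi^{(j)}$-normalization, and interference, finishes the proof after the (routine but lengthy) algebraic simplification that produces \eqref{eq:main_5}–\eqref{eq:main_7}.
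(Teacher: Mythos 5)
Your high-level decomposition is the same as the paper's: rank-one removal plus the trace lemma for the coherent signal term, a common deterministic equivalent $\Gamma^o$ for the power normalizations, and an interference expansion split into diagonal ($j=j'$) and cross ($j\neq j'$) blocks. However, there is a genuine gap at the crux of the distributed analysis, namely your treatment of the cross terms. You claim that the independence $\bm{\Delta}^{(j)}\perp\bm{\Delta}^{(j')}$ allows you ``to factor the expectation into a product of two scalar deterministic equivalents.'' This mechanism fails: the two resolvents $(\bC^{(j)})^{-1}$ and $(\bC^{(j')})^{-1}$ are both built on the \emph{common} matrix $\bH$, so the coupled quantity $\frac{1}{M^2}\trace\LB(\bC^{(j)}_{[k]})^{-1}(\hat{\bH}^{(j)}_{[k]})^{\He}\hat{\bH}^{(j')}_{[k]}(\bC^{(j')}_{[k]})^{-1}\RB$ does not decouple into independent factors. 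In the paper's proof (the novel Lemma~\ref{lemma1}), one resolvent is expanded around the deterministic matrix $\bQ_o$ via $\bQ'-\bQ_o=\bQ_o\LB\bQ_o^{-1}-\bQ'^{-1}\RB\bQ'$, and the same trace quantity then \emph{reappears} on the right-hand side (in the term $Z_5$); solving the resulting self-consistent equation is what produces the resummation denominator $1-\frac{1}{\beta}\LB 1-(\sigma^{(j)})^2\RB\LB 1-(\sigma^{(j')})^2\RB\frac{\delta^2}{(1+\delta)^2}$ in $\Gamma^o_{j,j'}$ of \eqref{eq:main_7}. No product of two $\delta$-type scalar equivalents can generate this factor. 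Your fallback of ``$\delta'$ or $\frac{1}{M}\trace(\bU\bQ\bV\bQ)$-type'' second-order equivalents also does not apply as stated, since those classical results concern a \emph{single} resolvent appearing twice, not two distinct, correlated resolvents; this is precisely why the paper had to prove Lemma~\ref{lemma1} rather than cite existing machinery.

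A second, related omission: after removing user $k$, the perturbation $(\bC^{(j)})^{-1}-(\bC^{(j)}_{[k]})^{-1}$ is \emph{not} a plain rank-one update in the variables you need. Since $\hat{\bh}_k^{(j)}=\sqrt{1-(\sigma^{(j)})^2}\,\bh_k+\sigma^{(j)}\bm{\delta}_k^{(j)}$, the update has the structured form $c^{(j)}_0\bh_k\bh_k^{\He}+c^{(j)}_1\bm{\delta}_k^{(j)}(\bm{\delta}_k^{(j)})^{\He}+c^{(j)}_2\LB\bh_k(\bm{\delta}_k^{(j)})^{\He}+\bm{\delta}_k^{(j)}\bh_k^{\He}\RB$, and the interference expansion generates bilinear forms such as $\bh_k^{\He}(\bC^{(j)})^{-1}\bm{\delta}_k^{(j)}$ that mix the correlated vectors $\bh_k$ and $\bm{\delta}_k^{(j)}$ through the perturbed inverse; the standard rank-one lemmas do not handle these, which is why the paper invokes Lemma~\ref{lemma_c_0} and proves the new Lemma~\ref{lemma2}. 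These terms are not mere bookkeeping: they are the source of the $(\sigma^{(j)})^5\sqrt{1-(\sigma^{(j)})^2}$ and $(\sigma^{(j)})^6,(\sigma^{(j)})^8$ contributions in \eqref{eq:main_5}, so without a lemma of this type your plan cannot produce $I_k^o$. Finally, a minor point: $\Psi^{(j)}$ is itself a two-resolvent quantity, $\frac{1}{M^2}\trace\LB\bQ\hat{\bH}^{\He}\hat{\bH}\bQ\RB$, so it cannot be obtained ``by applying Theorem~\ref{fundamental_theorem} with $\bU$ chosen'' suitably ($\bU$ would be random and correlated with $\bQ$); the paper obtains it from Lemma~\ref{lemma1} with $\sigma'=\sigma''=0$ (or from the literature), which is consistent with the rest of the argument.
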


%%%%%%%%%%%%%%%%%%%%%%%%%%%%%%%%%%%%%%%%%%%%%%
%%%%%%%%%%%%%%%%%%%%%%%%%%%%%%%%%%%%%%%%%%%%%%
%%%%%%%%%%%%%%%%%%%%%%%%%%%%%%%%%%%%%%%%%%%%%%
%%%%%%%%%%%%%%%%%%%%%%%%%%%%%%%%%%%%%%%%%%%%%%
\section{Proof of Theorem~\ref{theorem}}\label{se:proof}
%%%%%%%%%%%%%%%%%%%%%%%%%%%%%%%%%%%%%%%%%%%%%%
%%%%%%%%%%%%%%%%%%%%%%%%%%%%%%%%%%%%%%%%%%%%%%
%%%%%%%%%%%%%%%%%%%%%%%%%%%%%%%%%%%%%%%%%%%%%%
%%%%%%%%%%%%%%%%%%%%%%%%%%%%%%%%%%%%%%%%%%%%%%
Our calculation is built upon results from both \cite{Wagner2012} and \cite{Muller2013}. We also make extensive use of classical RMT lemmas recalled in Appendix~\ref{app:literature}. Note that Lemma~\ref{lemma1} and Lemma~\ref{lemma2} are novel. In particular, Lemma~\ref{lemma1} extends \cite[Lemma~$15$]{Muller2013} and is an interesting result in itself. 

During the calculation we use the notation $x \asymp y$ to denote that $x-y\xrightarrow[K,M\rightarrow \infty]{a.s.}0$.
%%%%%%%%%%%%%%%%%%%%%%%%%%%%%%%%%%%%%%%%%%%%%%
%%%%%%%%%%%%%%%%%%%%%%%%%%%%%%%%%%%%%%%%%%%%%%
\subsection{Deterministic Equivalent for $\Psi^{(j)}$}\label{se:proof:preliminaries}
%%%%%%%%%%%%%%%%%%%%%%%%%%%%%%%%%%%%%%%%%%%%%%
%%%%%%%%%%%%%%%%%%%%%%%%%%%%%%%%%%%%%%%%%%%%%%

We start by finding a deterministic equivalent for $\Psi^{(j)}$. In fact, a deterministic equivalent for $\Psi^{(j)}$ is provided in \cite{Wagner2012}. However, it can also be obtained using Lemma~\ref{lemma1} with~$\sigma^{(j)}=\sigma^{(j')}=0$, which gives
\begin{equation}
\Psi^{(j)}\asymp\Gamma^o.
\label{eq:main_proof_1}
\end{equation}
Looking at the definition of $\Gamma^o$ in \eqref{eq:main_6}, it can be noted that, as expected, this deterministic equivalent does not depend on~$\sigma^{(j)}$.
 
%%%%%%%%%%%%%%%%%%%%%%%%%%%%%%%%%%%%%%%%%%%%%%
%%%%%%%%%%%%%%%%%%%%%%%%%%%%%%%%%%%%%%%%%%%%%%
\subsection{Deterministic Equivalent for $\bh_k^\He\bt^{\DCSI}_{\rZF,k}$:}\label{se:proof:preliminaries}
%%%%%%%%%%%%%%%%%%%%%%%%%%%%%%%%%%%%%%%%%%%%%%
%%%%%%%%%%%%%%%%%%%%%%%%%%%%%%%%%%%%%%%%%%%%%%
Turning to the desired signal at RX~$k$, we can write 
\begin{align}
\bh_k^\He\bt_{\rZF,k}^{\DCSI}
&=  \sum_{j=1}^n \frac{1}{M}\frac{\sqrt{P}}{\sqrt{\Psi^{(j)}}}\bh_k^\He\bE_j\bE_j^{\He}(\bC^{(j)})^{-1}\hat{\bh}^{(j)}_k\\
&\stackrel{(a)}{\asymp} \sqrt{\frac{P}{\Gamma^o}}\sum_{j=1}^n \frac{\frac{1}{M}\bh_k^\He\bE_j\bE_j^{\He}(\bC_{[k]}^{(j)})^{-1}\hat{\bh}^{(j)}_k}{1+\frac{1}{M}\bh_k^\He (\bC_{[k]}^{(j)})^{-1} \bh_k}\\ 
&\stackrel{(b)}{\asymp}  \sqrt{\frac{P}{\Gamma^o}}\sum_{j=1}^n \sqrt{1-(\sigma^{(j)})^2}  \frac{\frac{1}{M}\bh_k^\He\bE_j\bE_j^{\He}(\bC_{[k]}^{(j)})^{-1}\bh_k}{1+\frac{1}{M}\bh_k^\He (\bC_{[k]}^{(j)})^{-1} \bh_k}\\ 
& \stackrel{(c)}{\asymp}\sqrt{\frac{P}{\Gamma^o}} \sum_{j=1}^n  \sqrt{1-(\sigma^{(j)})^2} \frac{ \frac{1}{M}\trace \LB \bE_j \bE_j^{\He}(\bC_{[k]}^{(j)})^{-1}\RB}{1+\frac{1}{M} \trace \LB  (\bC_{[k]}^{(j)})^{-1} \RB} \\	
%& \stackrel{(d)}{\asymp}\sqrt{\frac{P}{\Gamma^o}} \sum_{j=1}^n  \sqrt{1-(\sigma^{(j)})^2} \frac{\frac{1}{M}\trace \LB \bE_j \bE_j^{\He}\bC^{-1}\RB}{1+\frac{1}{M} \trace \LB  \bC^{-1} \RB} \\
&\stackrel{(d)}{\asymp}\sqrt{\frac{P}{\Gamma^o}} \frac{1}{n}\sum_{j=1}^n  \sqrt{1-(\sigma^{(j)})^2} \frac{\delta}{1+\delta}
\label{eq:main_proof_2}
\end{align} 
where we have defined
\begin{align}
\bC_{[k]}^{(j)}\triangleq  \frac{ \hat{\bH}_{[k]}^{(j)}(\hat{\bH}_{[k]}^{(j)})^{\He}}{M}+\alpha\I_{M},\qquad \forall j
\label{eq:main_proof_3}
\end{align}
with 
\begin{align}
(\hat{\bH}_{[k]}^{(j)})^{\He}\triangleq   \begin{bmatrix}
\hat{\bh}_1^{(j)} &\hdots&\hat{\bh}_{k-1}^{(j)} &\hat{\bh}_{k+1}^{(j)} &\hdots&\hat{\bh}_K^{(j)} 
\end{bmatrix},\qquad \forall j.
\label{eq:main_proof_4}
\end{align}
Equality $(a)$ follows then from Lemma~\ref{lemma_resolvent} and the use of the deterministic equivalent derived for~$\Psi^{(j)}$, $(b)$ from Lemma~\ref{lemma_zero}, $(c)$ from Lemma~\ref{lemma_trace}, $(d)$ from Lemma~\ref{lemma_rank1} and the fundamental Theorem~\ref{fundamental_theorem}. Note that $\delta$ is defined in Theorem~\ref{fundamental_theorem}. It follows then directly that
\begin{align} 
\left |\bh_k^\He\bt_{\rZF,k}^{\DCSI}\right |^2&\asymp \frac{P}{\Gamma^o}\LB \frac{1}{n}\sum_{j=1}^n \sqrt{1-(\sigma^{(j)})^2}\RB^2 \frac{\delta^2}{\LB 1+\delta\RB^2}.
\label{eq:main_proof_5}
\end{align}   
 
%%%%%%%%%%%%%%%%%%%%%%%%%%%%%%%%%%%%%%%%%%%%%%
%%%%%%%%%%%%%%%%%%%%%%%%%%%%%%%%%%%%%%%%%%%%%%
\subsection{Deterministic Equivalent for the Interference Term}\label{se:proof:preliminaries}
%%%%%%%%%%%%%%%%%%%%%%%%%%%%%%%%%%%%%%%%%%%%%%
%%%%%%%%%%%%%%%%%%%%%%%%%%%%%%%%%%%%%%%%%%%%%% 
Our first step is to write explicitly the interference term using the definition of~$\bT^{\DCSI}$ in \eqref{eq:main_6} and replace~$\Psi^{(j)}$ by its deterministic equivalent.
\begin{align} 
\mathcal{I}_k&\triangleq \sum_{\ell=1,\ell\neq k}^K|\bh_k^{\He}\bt_{\rZF,\ell}^{\DCSI}|^2\\
&=\bh_k^{\He}\bT_{\rZF}^{\DCSI}  (\bT_{\rZF}^{\DCSI})^{\He}\bh_k-\bh_k^{\He}\bt_{\rZF,k}^{\DCSI}(\bt_{\rZF,k}^{\DCSI})^{\He}\bh_k\\ 
&=\frac{1}{M^2}\sum_{j=1}^n\sum_{j'=1}^n  \frac{P}{\sqrt{\Psi^{(j)}}\sqrt{\Psi^{(j')}}}\bh_k^{\He} \bE_j\bE_j^{\He}(\bC^{(j)})^{-1}(\hat{\bH}_{[k]}^{(j)})^{\He} \hat{\bH}_{[k]}^{(j')}(\bC^{(j')})^{-1}\bE_{j'}\bE_{j'}^{\He} \bh_k\\
&\asymp \frac{P}{\Gamma^o}\frac{1}{M^2} \sum_{j=1}^n\sum_{j'=1}^n \bh_k^{\He}\bE_j\bE_j^{\He}(\bC_{[k]}^{(j)})^{-1}(\hat{\bH}_{[k]}^{(j)})^{\He}\hat{\bH}_{[k]}^{(j')}(\bC^{({j'})})^{-1} \bE_{j'}\bE_{j'}^{\He}\bh_k\\ 
&+\frac{P}{\Gamma^o}\frac{1}{M^2} \sum_{j=1}^n\sum_{j'=1}^n \bh_k^{\He} \bE_j\bE_j^{\He}\LB(\bC^{(j)})^{-1}-(\bC_{[k]}^{(j)})^{-1}\RB(\hat{\bH}_{[k]}^{(j)})^{\He}\hat{\bH}_{[k]}^{(j')}(\bC^{({j'})})^{-1}\bE_{j'}\bE_{j'}^{\He} \bh_k.
\label{eq:main_proof_6}
\end{align}
To obtain a deterministic equivalent for the second summation in \eqref{eq:main_proof_6} we use the following relation 
\begin{align} 
(\bC^{(j)})^{-1}-(\bC_{[k]}^{(j)})^{-1}&=(\bC^{(j)})^{-1} \LB \bC_{[k]}^{(j)}-\bC^{(j)} \RB (\bC_{[k]}^{(j)})^{-1}\\
&=\!-\!(\bC^{(j)})^{-1} \!\LB\! c^{(j)}_0 \bh_k \bh_k^{\He}\!+\!c^{(j)}_1 \bm{\delta}_k^{(j)} (\bm{\delta}_k^{(j)})^{\He}\!+\!c^{(j)}_2 \bm{\delta}_k^{(j)} \bh_k^{\He}\!+\!c^{(j)}_2 \bh_k (\bm{\delta}_k^{(j)})^{\He} \RB (\bC_{[k]}^{(j)})^{-1}
\label{eq:main_proof_7} 
\end{align}
Inserting \eqref{eq:main_proof_7} in \eqref{eq:main_proof_6} yields
\begin{align} 
&\tilde{\mathcal{I}}_k\notag\\	
&\asymp \frac{P}{\Gamma^o}\frac{1}{M^2} \sum_{j=1}^n\sum_{j'=1}^n \bh_k^{\He}\bE_j\bE_j^{\He}(\bC_{[k]}^{(j)})^{-1}(\hat{\bH}_{[k]}^{(j)})^{\He}\hat{\bH}_{[k]}^{(j')}(\bC^{({j'})})^{-1} \bE_{j'}\bE_{j'}^{\He}\bh_k\notag\\ 
&-\frac{P}{\Gamma^o}\frac{1}{M^3} \sum_{j=1}^n\sum_{j'=1}^n \bh_k^{\He} \bE_j\bE_j^{\He} (\bC^{(j)})^{-1}  \LSB   \bh_k c^{(j)}_0 \bh_k^{\He}  \RSB  (\bC_{[k]}^{(j)})^{-1}    (\hat{\bH}_{[k]}^{(j)})^{\He}\hat{\bH}_{[k]}^{(j')}(\bC^{({j'})})^{-1}\bE_{j'}\bE_{j'}^{\He} \bh_k\notag\\
&-\frac{P}{\Gamma^o}\frac{1}{M^3} \sum_{j=1}^n\sum_{j'=1}^n \bh_k^{\He} \bE_j\bE_j^{\He}  (\bC^{(j)})^{-1} \LSB  \bm{\delta}_k^{(j)} c^{(j)}_1(\bm{\delta}_k^{(j)})^{\He}\RSB   (\bC_{[k]}^{(j)})^{-1}     (\hat{\bH}_{[k]}^{(j)})^{\He}\hat{\bH}_{[k]}^{(j')}(\bC^{({j'})})^{-1}\bE_{j'}\bE_{j'}^{\He} \bh_k\notag\\
&-\frac{P}{\Gamma^o}\frac{1}{M^3} \sum_{j=1}^n\sum_{j'=1}^n \bh_k^{\He} \bE_j\bE_j^{\He}  (\bC^{(j)})^{-1} \LSB   \bm{\delta}^{(j)}_k c^{(j)}_2 \bh_k^{\He}     \RSB(\bC_{[k]}^{(j)})^{-1} (\hat{\bH}_{[k]}^{(j)})^{\He}\hat{\bH}_{[k]}^{(j')}(\bC^{({j'})})^{-1}\bE_{j'}\bE_{j'}^{\He} \bh_k\notag\\
&-\frac{P}{\Gamma^o}\frac{1}{M^3} \sum_{j=1}^n\sum_{j'=1}^n \bh_k^{\He} \bE_j\bE_j^{\He}  (\bC^{(j)})^{-1}  \LSB  \bh_k c^{(j)}_2(\bm{\delta}_k^{(j)})^{\He}  \RSB   (\bC_{[k]}^{(j)})^{-1} (\hat{\bH}_{[k]}^{(j)})^{\He}\hat{\bH}_{[k]}^{(j')}(\bC^{({j'})})^{-1}\bE_{j'}\bE_{j'}^{\He} \bh_k\\
&\triangleq A-B-C-D-E.
\label{eq:main_proof_8} 
\end{align}  
We proceed by calculating each of the $5$~terms in \eqref{eq:main_proof_8} successively, using in particular Lemma~\ref{lemma2}:
\begin{align}  
A&=\frac{P}{\Gamma^o}\frac{1}{M^2} \sum_{j=1}^n\sum_{j'=1}^n \bh_k^{\He}\bE_j\bE_j^{\He}(\bC_{[k]}^{(j)})^{-1}(\hat{\bH}_{[k]}^{(j)})^{\He}\hat{\bH}_{[k]}^{(j')}(\bC^{({j'})})^{-1} \bE_{j'}\bE_{j'}^{\He}\bh_k\\
&\asymp \frac{P}{\Gamma^o}\sum_{j=1}^n\sum_{j'=1}^n  \frac{\trace \LB \bE_{j'}\bE_{j'}^{\He}   \bE_j\bE_j^{\He}(\bC^{(j)}_{[k]})^{-1}(\hat{\bH}_{[k]}^{(j)})^{\He}\bP_{[k]}\hat{\bH}_{[k]}^{(j')}(\bC_{[k]}^{({j'})})^{-1} \RB}{M^2}\notag\\
&-c_0^{(j')}\frac{\trace \LB \bE_j\bE_j^{\He}(\bC^{(j)}_{[k]})^{-1}(\hat{\bH}_{[k]}^{(j)})^{\He}\bP_{[k]}\hat{\bH}_{[k]}^{(j')}(\bC_{[k]}^{({j'})})^{-1}\RB}{M^2}   \frac{\trace \LB \bE_{j'}\bE_{j'}^{\He} (\bC_{[k]}^{({j'})})^{-1}\RB}{M}  \frac{1+c_1^{(j')} \frac{\trace \LB  (\bC_{[k]}^{({j'})})^{-1}\RB}{M}}{1+\frac{\trace \LB  (\bC_{[k]}^{({j'})})^{-1}\RB}{M}}\notag\\
&+ (c_2^{(j')})^2\frac{\trace \LB  \bE_j\bE_j^{\He}(\bC^{(j)}_{[k]})^{-1}(\hat{\bH}_{[k]}^{(j)})^{\He}\bP_{[k]}\hat{\bH}_{[k]}^{(j')}(\bC_{[k]}^{({j'})})^{-1}\RB}{M^2}  \frac{\trace \LB \bE_{j'}\bE_{j'}^{\He}(\bC_{[k]}^{({j'})})^{-1}\RB}{M}  \frac{\frac{\trace \LB  (\bC_{[k]}^{({j'})})^{-1}\RB}{M}}{1+\frac{\trace \LB (\bC_{[k]}^{({j'})})^{-1}\RB}{M}}.
\label{eq:main_proof_9} 
\end{align}  
From the unitary invariance of the distribution of~$\bH$ and $\bm{\Delta}^{(j)}$, it can be shown that

\begin{align}  
\frac{\trace \LB \bE_j\bE_j^{\He}(\bC^{(j)}_{[k]})^{-1}(\hat{\bH}_{[k]}^{(j)})^{\He}\bP_{[k]}\hat{\bH}_{[k]}^{(j')}(\bC_{[k]}^{({j'})})^{-1}\RB}{M^2}
&=\frac{1}{n}\frac{\trace \LB(\bC^{(j)}_{[k]})^{-1}(\hat{\bH}_{[k]}^{(j)})^{\He}\bP_{[k]}\hat{\bH}_{[k]}^{(j')}(\bC_{[k]}^{({j'})})^{-1}\RB}{M^2}\\
&\asymp \frac{1}{n}\Gamma^o_{j,j'}
\label{eq:main_proof_10} 
\end{align}  
where the last equality follows directly from applying Lemma~\ref{lemma1}. Inserting \eqref{eq:main_proof_10} in \eqref{eq:main_proof_9} and using the fundamental Theorem~\ref{fundamental_theorem} yields
\begin{align}  
A&\asymp \frac{P}{\Gamma^o}\sum_{j=1}^n\sum_{j'=1}^n  \frac{\Gamma^o}{n}\mathbb{1}_{j=j'}-c_0^{(j')}\frac{\Gamma^o_{j,j'}}{n}   \frac{\delta}{n}  \frac{1+c_1^{(j')} \delta}{1+\delta}+ (c_2^{(j')})^2\frac{\Gamma^o_{j,j'}}{n}   \frac{\delta}{n}   \frac{\delta}{ 1+\delta}.
\label{eq:main_proof_11} 
\end{align}  
We then proceed similarly for the remaining~$4$ terms:
\begin{align}  
B&=\frac{P}{\Gamma^o}\frac{1}{M^3} \sum_{j=1}^n\sum_{j'=1}^n \LSB \bh_k^{\He} \bE_j\bE_j^{\He}   (\bC^{(j)})^{-1}  \bh_k \RSB  \cdot c^{(j)}_0 \cdot\LSB \bh_k^{\He}     (\bC_{[k]}^{({j})})^{-1}  (\hat{\bH}_{[k]}^{(j)})^{\He}\hat{\bH}_{[k]}^{(j')}(\bC^{({j'})})^{-1}\bE_{j'}\bE_{j'}^{\He} \bh_k\RSB \\
&\asymp\frac{P}{\Gamma^o} \sum_{j=1}^n\sum_{j'=1}^n c^{(j)}_0  \LSB \frac{\trace \LB  \bE_j\bE_j^{\He}   (\bC_{[k]}^{(j)})^{-1} \RB}{M} \frac{1+c_1^{(j)} \frac{\trace \LB   (\bC_{[k]}^{(j)})^{-1}\RB}{M}}{1+\frac{\trace \LB (\bC_{[k]}^{(j)})^{-1}\RB}{M}}  \RSB\notag\\
&\cdot \bigg [ \frac{\trace \LB (\bC^{(j)}_{[k]})^{-1}(\hat{\bH}_{[k]}^{(j)})^{\He}\bP_{[k]}\hat{\bH}_{[k]}^{(j')}(\bC_{[k]}^{({j'})})^{-1} \bE_{j'}\bE_{j'}^{\He} \RB}{M^2}  \notag\\
&-c_0^{(j')} \frac{\trace\LB  (\bC_{[k]}^{(j)})^{-1}  (\hat{\bH}_{[k]}^{(j)})^{\He}\bP_{[k]}\hat{\bH}_{[k]}^{(j')}(\bC_{[k]}^{({j'})})^{-1} \RB }{M^2}     \frac{\trace\LB \bE_{j'}\bE_{j'}^{\He} (\bC_{[k]}^{({j'})})^{-1} \RB}{M} \frac{1+c_1^{(j')} \frac{\trace \LB (\bC_{[k]}^{(j)})^{-1}\RB}{M}}{1+\frac{\trace \LB  (\bC_{[k]}^{(j)})^{-1}\RB}{M}}   \notag\\
&+(c_2^{(j')})^2\frac{\trace\LB  (\bC_{[k]}^{(j)})^{-1}  (\hat{\bH}_{[k]}^{(j)})^{\He}\bP_{[k]}\hat{\bH}_{[k]}^{(j')}(\bC_{[k]}^{({j'})})^{-1} \RB }{M^2}     \frac{\trace\LB \bE_{j'}\bE_{j'}^{\He}  (\bC_{[k]}^{({j'})})^{-1} \RB}{M} \frac{\frac{\trace \LB  \bC^{-1}\RB}{M}}{1+\frac{\trace \LB  \bC^{-1}\RB}{M}}  \bigg]\\
&\asymp \frac{P}{\Gamma^o}\sum_{j=1}^n\sum_{j'=1}^n c^{(j)}_0    \frac{\delta}{n} \frac{1+c_1^{(j)} \delta}{1+\delta}   \LB \frac{\Gamma^o_{j,j'}}{n} -c_0^{(j')} \Gamma^o_{j,j'}     \frac{\delta}{n} \frac{1+c_1^{(j')} \delta}{1+\delta} +(c_2^{(j')})^2\Gamma^o_{j,j'}    \frac{\delta}{n} \frac{\delta}{1+\delta}  \RB.
\label{eq:main_proof_12} 
\end{align}  

\begin{align}  
C&=\frac{P}{\Gamma^o}\frac{1}{M^3} \sum_{j=1}^n\sum_{j'=1}^n \!\LSB\! \bh_k^{\He} \bE_j\bE_j^{\He}  (\bC^{(j)})^{-1}    \bm{\delta}_k^{(j)} \RSB\!\cdot \! c^{(j)}_1\!\cdot \!\LSB \!\bm{(\delta}_k^{(j)})^{\He}   (\bC_{[k]}^{(j)})^{-1}     (\hat{\bH}_{[k]}^{(j)})^{\He}\hat{\bH}_{[k]}^{(j')}(\bC^{({j'})})^{-1}\bE_{j'}\bE_{j'}^{\He} \bh_k\RSB\\
&=\frac{P}{\Gamma^o}\frac{1}{M^3} \sum_{j=1}^n \LSB \bh_k^{\He} \bE_j\bE_j^{\He}  (\bC^{(j)})^{-1}    \bm{\delta}_k^{(j)} \RSB\cdot  c^{(j)}_1\cdot \LSB \bm{(\delta}_k^{(j)})^{\He}   (\bC_{[k]}^{(j)})^{-1}     (\hat{\bH}_{[k]}^{(j)})^{\He}\bH_{[k]}^{({j})}(\bC^{({j})})^{-1}\bE_{j}\bE_{j}^{\He} \bh_k\RSB\\
&\asymp  \frac{P}{\Gamma^o} \sum_{j=1}^n c^{(j)}_1   \LSB - c^{(j)}_2 \frac{\trace \LB \bE_j\bE_j^{\He}   (\bC_{[k]}^{(j)})^{-1} \RB}{M} \frac{\frac{\trace \LB  (\bC_{[k]}^{(j)})^{-1}\RB}{M}}{1+\frac{\trace \LB  (\bC_{[k]}^{(j)})^{-1}\RB}{M}} \RSB\notag\\
&\cdot \bigg [ c_1^{(j)}c_2^{(j)} \frac{\trace\LB  (\bC_{[k]}^{(j)})^{-1}  (\hat{\bH}_{[k]}^{(j)})^{\He}\bP_{[k]}\hat{\bH}_{[k]}^{(j)}(\bC_{[k]}^{(j)})^{-1} \RB }{M^2}     \frac{\trace\LB \bE_j\bE_j^{\He} (\bC_{[k]}^{(j)})^{-1}  \RB}{M} \frac{\frac{\trace \LB  (\bC_{[k]}^{(j)})^{-1}\RB}{M}}{1+\frac{\trace \LB  (\bC_{[k]}^{(j)})^{-1}\RB}{M}} \notag  \\
&-c_2^{(j)} \frac{\trace\LB  (\bC_{[k]}^{(j)})^{-1}  (\hat{\bH}_{[k]}^{(j)})^{\He}\bP_{[k]}\hat{\bH}_{[k]}^{(j)}(\bC_{[k]}^{(j)})^{-1} \RB }{M^2}     \frac{\trace\LB  \bE_j\bE_j^{\He} (\bC_{[k]}^{(j)})^{-1} \RB}{M} \frac{1+c_1^{(j)}\frac{\trace \LB  (\bC_{[k]}^{(j)})^{-1}\RB}{M}}{1+\frac{\trace \LB   (\bC_{[k]}^{(j)})^{-1}\RB}{M}}  \bigg]\\
&\asymp  \frac{P}{\Gamma^o} \sum_{j=1}^n c^{(j)}_1  (-1) c^{(j)}_2 \frac{\delta}{n} \frac{\delta}{1+\delta} \LB c_1^{(j)}c_2^{(j)} \Gamma^o      \frac{\delta}{n} \frac{\delta}{1+\delta} -c_2^{(j)} \Gamma^o     \frac{\delta}{n} \frac{1+c_1^{(j)}\delta}{1+\delta}\RB.
\label{eq:main_proof_13} 
\end{align} 

\begin{align}  
D&=\frac{P}{\Gamma^o}\frac{1}{M^3} \sum_{j=1}^n\sum_{j'=1}^n \LSB \bh_k^{\He} \bE_j\bE_j^{\He}  (\bC^{(j)})^{-1} \bm{\delta}^{(j)}_k \RSB \cdot  c^{(j)}_2 \!\cdot \!\LSB \!\bh_k^{\He} (\bC_{[k]}^{(j)})^{-1} (\hat{\bH}_{[k]}^{(j)})^{\He}\hat{\bH}_{[k]}^{(j')}(\bC^{({j'})})^{-1}\bE_{j'}\bE_{j'}^{\He} \bh_k\RSB \notag\\ 
&\asymp\frac{P}{\Gamma^o} \sum_{j=1}^n \sum_{j'=1}^n  c^{(j)}_2\LSB  (-1) c^{(j)}_2 \frac{\trace \LB \bE_j\bE_j^{\He}   (\bC_{[k]}^{(j)})^{-1} \RB}{M} \frac{\frac{\trace \LB  (\bC_{[k]}^{(j)})^{-1}\RB}{M}}{1+\frac{\trace \LB  (\bC_{[k]}^{(j)})^{-1}\RB}{M}} \RSB\notag\\
&\cdot \bigg [ \frac{\trace \LB\bE_{j'}\bE_{j'}^{\He}  (\bC^{(j)}_{[k]})^{-1}(\hat{\bH}_{[k]}^{(j)})^{\He}\bP_{[k]}\hat{\bH}_{[k]}^{(j')}(\bC_{[k]}^{({j'})})^{-1} \RB}{M^2}  \notag\\
&-c_0^{(j')} \frac{\trace\LB   (\bC_{[k]}^{(j)})^{-1}  (\hat{\bH}_{[k]}^{(j)})^{\He}\bP_{[k]}\hat{\bH}_{[k]}^{(j')}(\bC_{[k]}^{({j'})})^{-1} \RB }{M^2}     \frac{\trace\LB \bE_{j'}\bE_{j'}^{\He} (\bC_{[k]}^{({j'})})^{-1}  \RB}{M} \frac{1+c_1^{(j')} \frac{\trace \LB   \bC^{-1}\RB}{M}}{1+\frac{\trace \LB  \bC^{-1}\RB}{M}}   \notag\\
&+(c_2^{(j')})^2\frac{\trace\LB   (\bC_{[k]}^{(j)})^{-1}  (\hat{\bH}_{[k]}^{(j)})^{\He}\bP_{[k]}\hat{\bH}_{[k]}^{(j')}(\bC_{[k]}^{({j'})})^{-1} \RB }{M^2}     \frac{\trace\LB \bE_{j'}\bE_{j'}^{\He}   (\bC_{[k]}^{({j'})})^{-1} \RB}{M} \frac{\frac{\trace \LB   (\bC_{[k]}^{({j'})})^{-1}\RB}{M}}{1\!+\!\frac{\trace \LB  (\bC_{[k]}^{({j'})})^{-1}\RB}{M}}  \bigg ]\\
&\asymp\frac{P}{\Gamma^o} \sum_{j=1}^n \sum_{j'=1}^n  c^{(j)}_2 (-1) c^{(j)}_2 \frac{\delta}{n} \frac{\delta}{1+\delta} \LB \frac{\Gamma^o_{j,j'}}{n} -c_0^{(j')}  \Gamma^o_{j,j'}      \frac{\delta}{n} \frac{1+c_1^{(j')} \delta}{1+\delta} +(c_2^{(j')})^2\Gamma^o_{j,j'}     \frac{\delta}{n} \frac{\delta}{1\!+\!\delta}  \RB.
\label{eq:main_proof_14} 
\end{align}

\begin{align}  
E&=\!\frac{P}{\Gamma^o}\frac{1}{M^3} \sum_{j=1}^n\sum_{j'=1}^n \LSB \bh_k^{\He} \bE_j\bE_j^{\He}  (\bC^{(j)})^{-1}     \bh_k\!\RSB \!\cdot\!  c^{(j)}_2\cdot \LSB (\bm{\delta}_k^{(j)})^{\He}    (\bC_{[k]}^{(j)})^{-1} (\hat{\bH}_{[k]}^{(j)})^{\He}\hat{\bH}_{[k]}^{(j')}(\bC^{({j'})})^{-1}\bE_{j'}\bE_{j'}^{\He} \bh_k\!\RSB \\
&=\frac{P}{\Gamma^o}\frac{1}{M^3}\! \sum_{j=1}^n \!\LSB \bh_k^{\He} \bE_j\bE_j^{\He}  (\bC^{(j)})^{-1}     \bh_k\!\RSB \!\cdot \! c^{(j)}_2\cdot \!\LSB\! (\bm{\delta}_k^{(j)})^{\He}    (\bC_{[k]}^{(j)})^{-1} (\hat{\bH}_{[k]}^{(j)})^{\He}\bH_{[k]}^{({j})}(\bC^{({j})})^{-1}\bE_{j}\bE_{j}^{\He} \bh_k\RSB \\
&\asymp  \sum_{j=1}^n c^{(j)}_2  \LSB \frac{\trace \LB   \bE_j\bE_j^{\He}   (\bC_{[k]}^{(j)})^{-1} \RB}{M}  \frac{1+c^{(j)}_1\frac{\trace \LB (\bC_{[k]}^{(j)})^{-1}\RB}{M}}{1+\frac{\trace \LB  (\bC_{[k]}^{(j)})^{-1}\RB}{M}}  \RSB\notag\\
&\cdot \bigg [ c_1^{(j)}c_2^{(j)} \frac{\trace\LB (\bC_{[k]}^{(j)})^{-1}  (\hat{\bH}_{[k]}^{(j)})^{\He}\bP_{[k]}\hat{\bH}_{[k]}^{(j)}(\bC_{[k]}^{(j)})^{-1} \RB }{M^2}     \frac{\trace\LB\bE_j\bE_j^{\He}  (\bC_{[k]}^{(j)})^{-1}  \RB}{M} \frac{\frac{\trace \LB  (\bC_{[k]}^{(j)})^{-1}\RB}{M}}{1+\frac{\trace \LB  (\bC_{[k]}^{(j)})^{-1}\RB}{M}}  \notag \\
&-c_2^{(j)} \frac{\trace\LB   (\bC_{[k]}^{(j)})^{-1}  (\hat{\bH}_{[k]}^{(j)})^{\He}\bP_{[k]}\hat{\bH}_{[k]}^{(j)}(\bC_{[k]}^{(j)})^{-1} \RB }{M^2}     \frac{\trace\LB \bE_j\bE_j^{\He}  (\bC_{[k]}^{(j)})^{-1}  \RB}{M} \frac{1+c_1^{(j)} \frac{\trace \LB  (\bC_{[k]}^{(j)})^{-1}\RB}{M}}{1+\frac{\trace \LB  (\bC_{[k]}^{(j)})^{-1}\RB}{M}}  \bigg]\\
&\asymp  \sum_{j=1}^n c^{(j)}_2   \frac{\delta}{n}  \frac{1+c^{(j)}_1\delta}{1+\delta} \LB c_1^{(j)}c_2^{(j)}  \Gamma^o      \frac{\delta}{n} \frac{\delta}{1+\delta}  -c_2^{(j)} \Gamma^o     \frac{\delta}{n} \frac{1+c_1^{(j)} \delta}{1+\delta}  \RB .
\label{eq:main_proof_15} 
\end{align} 
The final expression is obtained after inserting all the deterministic equivalents derived inside the interference expression \eqref{eq:main_proof_8}. The compact expression of the theorem is obtained after algebraic manipulations using the software Mathematica.

\begin{remark}
It is important to differentiate the cases $j=j'$ and $j\neq j'$ when computing $\Gamma^o_{j,j'}$. Indeed, in the case $j=j'$, it holds
\begin{equation}
\Gamma^o_{j,j'}=\Gamma^o.
\label{eq:main_proof_16}
\end{equation}\qed
\end{remark}
\FloatBarrier
%%%%%%%%%%%%%%%%%%%%%%%%%%%%%%%%%%%%%%%%%%%%%%%%%%%%%%%%%%%%%%%%%%%%%%%%%%%%%%%%%
%%%%%%%%%%%%%%%%%%%%%%%%%%%%%%%%%%%%%%%%%%%%%%%%%%%%%%%%%%%%%%%%%%%%%%%%%%%%%%%%%
%%%%%%%%%%%%%%%%%%%%%%%%%%%%%%%%%%%%%%%%%%%%%%%%%%%%%%%%%%%%%%%%%%%%%%%%%%%%%%%%%
%%%%%%%%%%%%%%%%%%%%%%%%%%%%%%%%%%%%%%%%%%%%%%%%%%%%%%%%%%%%%%%%%%%%%%%%%%%%%%%%%
\section{Simulation Results}\label{se:simulations}
%%%%%%%%%%%%%%%%%%%%%%%%%%%%%%%%%%%%%%%%%%%%%%%%%%%%%%%%%%%%%%%%%%%%%%%%%%%%%%%%%
%%%%%%%%%%%%%%%%%%%%%%%%%%%%%%%%%%%%%%%%%%%%%%%%%%%%%%%%%%%%%%%%%%%%%%%%%%%%%%%%%
%%%%%%%%%%%%%%%%%%%%%%%%%%%%%%%%%%%%%%%%%%%%%%%%%%%%%%%%%%%%%%%%%%%%%%%%%%%%%%%%%
%%%%%%%%%%%%%%%%%%%%%%%%%%%%%%%%%%%%%%%%%%%%%%%%%%%%%%%%%%%%%%%%%%%%%%%%%%%%%%%%%

We now verify using Monte-Carlo simulations the accuracy of the asymptotic expression derived in Theorem~\ref{theorem}. We consider a network consisting of $n=3$~TXs with a sum power constraint given by $P=10$~dB and~$\alpha=1/P$. We focus in this work on the case of $(\sigma^{(j)})^2=0.1,\forall j=1,\ldots,n$ so as to emphasize the \emph{price of distributedness}.
\begin{figure}[htp!] 
\centering
\includegraphics[width=1\columnwidth]{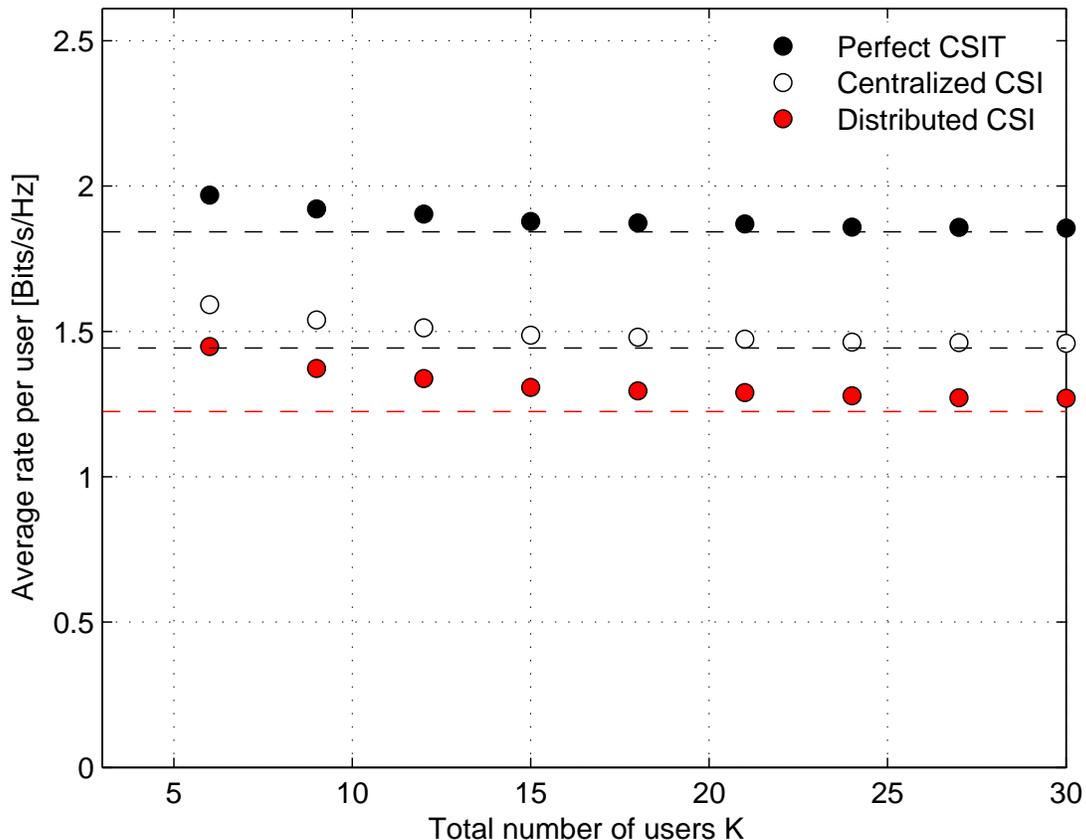}
\caption{Average rate per user as a function of the number of users~$K$ with $(\sigma^{(j)})^2=0.1,\forall j$.} 
\label{Rate_equal_01}
\end{figure}  

In Fig.~\ref{Rate_equal_01}, we show the rate per user as a function of the number of users for a square setting where $M=nM_{\TX}=K$ (i.e., $\beta=1$) in the distributed CSIT configuration where $(\sigma^{(j)})^2=0.1,\forall j=1,\ldots,n$. For comparison purpose, we also show the rate per user obtained in the case of imperfect centralized CSIT with $(\sigma^{\CCSI})^2=0.1$ and with perfect CSIT (i.e. $(\sigma^{\CCSI})^2=0$ or equivalently $(\sigma^{(j)})^2=0,\forall j$). As note earlier, a deterministic equivalent for the centralized case is obtained using $n=1$ in Theorem~\ref{theorem}. 

The large system deterministic equivalents are shown to be useful with just $20$ to $30$ users and antennas. In addition, the cost of having distributed information is also highlighted by the losses compared to the centralized configuration. This shows the necessity to take properly into account the CSI configuration when designing the feedback scheme and the precoder. 
\begin{figure}[htp!] 
\centering
\includegraphics[width=1\columnwidth]{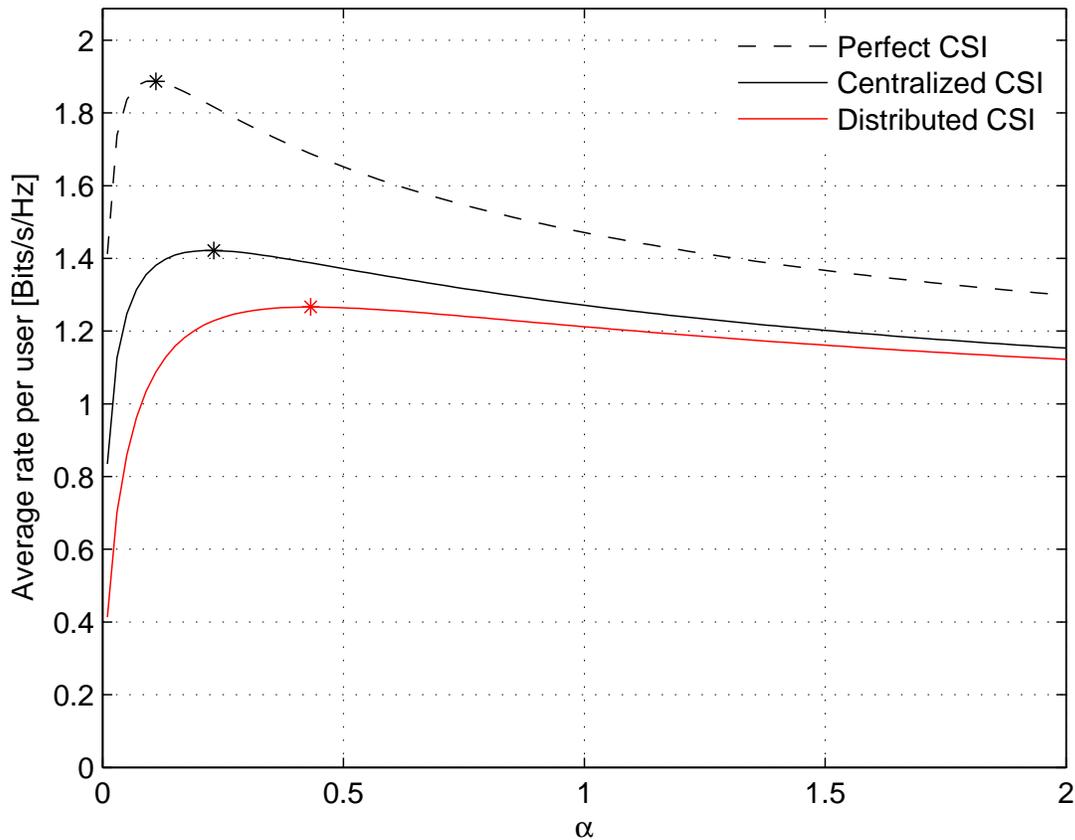}
\caption{Average rate per user as a function of~$\alpha$ for $(\sigma^{(j)})^2=0.1,\forall j$.} 
\label{Rate_optimization_alpha}
\end{figure}   
Considering the same network configuration, we show in Fig.~\ref{Rate_optimization_alpha} the average rate per user in terms of the regularization factor~$\alpha$. Quite interestingly it appears that the optimal regularization factors (represented with the * marker) are not the same in the centralized and distributed CSI settings. %The interest of the asymptotic expression is exactly to made very easy such parameter optimization instead of requiring expensive Monte-Carlo simulations.

%%%%%%%%%%%%%%%%%%%%%%%%%%%%%%%%%%%%%%%%%%%%%%%%%%%%%%%%%%%%%%%%%%%%%%%%%%%%%%%%%
%%%%%%%%%%%%%%%%%%%%%%%%%%%%%%%%%%%%%%%%%%%%%%%%%%%%%%%%%%%%%%%%%%%%%%%%%%%%%%%%%
%%%%%%%%%%%%%%%%%%%%%%%%%%%%%%%%%%%%%%%%%%%%%%%%%%%%%%%%%%%%%%%%%%%%%%%%%%%%%%%%%
%%%%%%%%%%%%%%%%%%%%%%%%%%%%%%%%%%%%%%%%%%%%%%%%%%%%%%%%%%%%%%%%%%%%%%%%%%%%%%%%%
\section{Conclusion}
%%%%%%%%%%%%%%%%%%%%%%%%%%%%%%%%%%%%%%%%%%%%%%%%%%%%%%%%%%%%%%%%%%%%%%%%%%%%%%%%%
%%%%%%%%%%%%%%%%%%%%%%%%%%%%%%%%%%%%%%%%%%%%%%%%%%%%%%%%%%%%%%%%%%%%%%%%%%%%%%%%%
%%%%%%%%%%%%%%%%%%%%%%%%%%%%%%%%%%%%%%%%%%%%%%%%%%%%%%%%%%%%%%%%%%%%%%%%%%%%%%%%%
%%%%%%%%%%%%%%%%%%%%%%%%%%%%%%%%%%%%%%%%%%%%%%%%%%%%%%%%%%%%%%%%%%%%%%%%%%%%%%%%%
We have studied in this work the joint transmission using regularized ZF in a distributed CSI configuration. Using RMT tools, an analytical expression has been derived to approximate the average rate per user. This expression becomes asymptotically exact in the large system limit where the number of transmit antennas and the number of receive antennas go to infinity at the same pace. This new deterministic equivalent reveals the cost related to not just CSI feedback limitation, but also backhaul sharing limitations and can be helpful to design more robust systems. The extensions to more general channel and CSI models are challenging and subject to ongoing work. Note that the price of distributedness is evaluated here for a conventional precoder. This further motivates the development of novel precoding schemes that are more suitable to the distributed CSI setting.
%%%%%%%%%%%%%%%%%%%%%%%%%%%%%%%%%%%%%%%%%%%%%%%%%%%%%%%%%%%%%%%%%%%%%%%%%%%%%%%%%
%%%%%%%%%%%%%%%%%%%%%%%%%%%%%%%%%%%%%%%%%%%%%%%%%%%%%%%%%%%%%%%%%%%%%%%%%%%%%%%%%
%%%%%%%%%%%%%%%%%%%%%%%%%%%%%%%%%%%%%%%%%%%%%%%%%%%%%%%%%%%%%%%%%%%%%%%%%%%%%%%%%
%%%%%%%%%%%%%%%%%%%%%%%%%%%%%%%%%%%%%%%%%%%%%%%%%%%%%%%%%%%%%%%%%%%%%%%%%%%%%%%%%
\appendix

%%%%%%%%%%%%%%%%%%%%%%%%%%%%%%%%%%%%%%%%%%%%%%%%%%%%%%%%%%%%%%%%%%%%%%%%%%%%%%%%%
%%%%%%%%%%%%%%%%%%%%%%%%%%%%%%%%%%%%%%%%%%%%%%%%%%%%%%%%%%%%%%%%%%%%%%%%%%%%%%%%%
%%%%%%%%%%%%%%%%%%%%%%%%%%%%%%%%%%%%%%%%%%%%%%%%%%%%%%%%%%%%%%%%%%%%%%%%%%%%%%%%%
%%%%%%%%%%%%%%%%%%%%%%%%%%%%%%%%%%%%%%%%%%%%%%%%%%%%%%%%%%%%%%%%%%%%%%%%%%%%%%%%%
\subsection{Classical Lemmas from the Literature} \label{app:literature}
%%%%%%%%%%%%%%%%%%%%%%%%%%%%%%%%%%%%%%%%%%%%%%%%%%%%%%%%%%%%%%%%%%%%%%%%%%%%%%%%%
%%%%%%%%%%%%%%%%%%%%%%%%%%%%%%%%%%%%%%%%%%%%%%%%%%%%%%%%%%%%%%%%%%%%%%%%%%%%%%%%%
%%%%%%%%%%%%%%%%%%%%%%%%%%%%%%%%%%%%%%%%%%%%%%%%%%%%%%%%%%%%%%%%%%%%%%%%%%%%%%%%%
%%%%%%%%%%%%%%%%%%%%%%%%%%%%%%%%%%%%%%%%%%%%%%%%%%%%%%%%%%%%%%%%%%%%%%%%%%%%%%%%%

\begin{lemma}[Adapted from \cite{Wagner2012,Couillet2011}]
Let $\alpha>0$ and $(\delta_k)_{k \geq 0}$ be the sequence defined as 
\begin{equation}
\begin{cases}
\delta_0=\frac{1}{\alpha}&  \\
\delta_k=\frac{1}{M}\trace \LB\alpha \I_M+\frac{1}{\beta\LB 1+\delta_{k-1}\RB}\bI_M \RB^{-1}& \text{for $k\geq 1$}
\label{eq:fixed_point}
\end{cases}.
\end{equation}
Then $\delta_k\xrightarrow{k\rightarrow\infty}\delta$, with $\delta$ being by construction a fixed point of \eqref{eq:fixed_point}.
\end{lemma}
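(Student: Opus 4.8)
The plan is to exploit the fact that, because $\bH$ has i.i.d.\ entries so that the relevant population covariance is the identity, every matrix appearing in \eqref{eq:fixed_point} is a scalar multiple of $\bI_M$. Hence the normalized trace collapses and the recursion reduces to $\delta_k=f(\delta_{k-1})$ with the one-dimensional map
\[
f(x)\triangleq\frac{1}{\alpha+\frac{1}{\beta\LB 1+x\RB}}=\frac{\beta\LB 1+x\RB}{\alpha\beta\LB 1+x\RB+1},\qquad x\geq 0.
\]
Since the fixed-point equation \eqref{eq:main_1} of Theorem~\ref{fundamental_theorem} is exactly $x=f(x)$ and $\delta$ is by assumption its unique solution, it suffices to show that the deterministic scalar iteration started at $\delta_0=1/\alpha$ converges; uniqueness from Theorem~\ref{fundamental_theorem} then forces the limit to equal $\delta$.

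First I would record the monotonicity and boundedness of $f$ on $[0,\infty)$. Differentiating gives $f'(x)=\LB \beta\LB 1+x\RB^2\LB \alpha+\frac{1}{\beta\LB 1+x\RB}\RB^2\RB^{-1}>0$, so $f$ is strictly increasing, while $f(x)<1/\alpha$ for every finite $x$ with $f(x)\to 1/\alpha$ as $x\to\infty$. In particular $f$ maps $[0,1/\alpha]$ into itself and $f(1/\alpha)<1/\alpha=\delta_0$.

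Next comes the standard monotone-sequence argument. From $\delta_1=f(\delta_0)<\delta_0$ together with the monotonicity of $f$, an easy induction gives $\delta_{k+1}=f(\delta_k)<f(\delta_{k-1})=\delta_k$, so $(\delta_k)_{k\geq 0}$ is strictly decreasing and, since $f>0$, bounded below by $0$. A bounded monotone sequence converges, and by continuity of $f$ its limit $\delta^\star$ satisfies $\delta^\star=f(\delta^\star)$, i.e.\ it is a fixed point of \eqref{eq:main_1}. The uniqueness asserted in Theorem~\ref{fundamental_theorem} then yields $\delta^\star=\delta$, completing the argument.

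I do not expect a genuine obstacle here, since the reduction to a scalar recursion removes essentially all of the difficulty; the only points needing care are keeping the iterates inside the invariant interval $[0,1/\alpha]$ and appealing to Theorem~\ref{fundamental_theorem} for uniqueness rather than re-deriving it. If one prefers to sidestep the explicit monotonicity computation — or to cover the more general covariance models mentioned in the paper, where $f$ becomes genuinely vector-valued — the cleaner route is to verify that $f$ is a \emph{standard interference function} in Yates' sense (positive, monotone, and scalable, $tf(x)>f(tx)$ for $t>1$, which here follows from $t/(1+tx)>1/(1+x)$) and to invoke the associated convergence theorem from \cite{Couillet2011}; this is the route the cited references take and it subsumes the scalar argument above.
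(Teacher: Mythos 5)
Your proof is correct; note, however, that the paper itself contains no proof of this lemma at all --- it is recalled in the appendix of ``classical lemmas'' purely as a citation to \cite{Wagner2012,Couillet2011}, so there is no in-paper argument to compare against, and your write-up in effect supplies the missing proof. Your scalar reduction is exactly right: with identity covariance the matrix in \eqref{eq:fixed_point} is $\bigl(\alpha+\tfrac{1}{\beta(1+\delta_{k-1})}\bigr)\mathbf{I}_M$, the normalized trace collapses, and the recursion becomes $\delta_k=f(\delta_{k-1})$ with $f(x)=\beta(1+x)/(\alpha\beta(1+x)+1)$; your chain --- $f$ strictly increasing, $f(x)<1/\alpha$ for all $x\geq 0$, hence $\delta_1<\delta_0=1/\alpha$ and the iterates strictly decrease while staying in $[0,1/\alpha]$, so they converge, the limit is a fixed point by continuity, and uniqueness from Theorem~\ref{fundamental_theorem} identifies it with $\delta$ --- is airtight, and your derivative and scalability computations check out. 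Your closing remark is also accurate: the cited references indeed establish convergence by verifying that the fixed-point map is a standard interference function in Yates' sense (positivity, monotonicity, scalability $tf(x)>f(tx)$ for $t>1$, which reduces to $t(1+x)>1+tx$), which buys convergence from \emph{any} initialization rather than just $\delta_0=1/\alpha$ and survives the generalization to correlated channels where the map is no longer scalar; your monotone argument buys elementary self-containedness at the price of being tied to this specific initialization and the identity-covariance model. One small point worth making explicit: the uniqueness in Theorem~\ref{fundamental_theorem} should be read as uniqueness among nonnegative solutions --- the quadratic $\alpha\beta x^2+(\alpha\beta+1-\beta)x-\beta=0$ underlying $x=f(x)$ also has a negative root --- and your limit satisfies $\delta^\star\geq f(0)>0$, so the appeal goes through.
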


\begin{lemma}[Resolvent Identities \cite{Muller2013,Couillet2011}]
\label{lemma_resolvent}
Given any matrix~$\bH\in\mathbb{C}^{K\times M}$, let $\bh^{\He}_k$ denote its $k$th row and $\bH_k\in\mathbb{C}^{(K-1)\times M}$ denote the matrix obtained after removing the $k$th row from $\bH$. The resolvent matrices of $\bH$ and $\bH_k$ are denoted by $\bQ\triangleq \LB \bH^\He\bH+\alpha\I_M\RB^{-1}$ and $\bQ_k\triangleq \LB \bH_k^\He\bH_k+\alpha\I_M\RB^{-1}$, with $\alpha>0$, respectively. It then holds that
\begin{align}
\bQ&=\bQ_k-\frac{1}{M}\frac{\bQ_k\bh_k\bh_k^{\He}\bQ_k}{1+\frac{1}{M}\bh_k^{\He}\bQ_k\bh_k}
\end{align} 
and
\begin{align} 
\bh_k^{\He}\bQ&=\frac{\bh_k^{\He}\bQ_k}{1+\frac{1}{M}\bh_k^{\He}\bQ_k\bh_k} .
\end{align} 
\end{lemma}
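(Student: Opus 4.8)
The plan is to read both identities as consequences of a single rank-one perturbation relating $\bQ$ and $\bQ_k$, i.e. as instances of the Sherman--Morrison matrix inversion lemma. To make the $\frac{1}{M}$ factors in the statement come out correctly I interpret the resolvents with the normalization used elsewhere in the paper (cf. Theorem~\ref{fundamental_theorem} and \eqref{eq:SM_8}), namely $\bQ=\LB\frac{1}{M}\bH^\He\bH+\alpha\bI_M\RB^{-1}$ and $\bQ_k=\LB\frac{1}{M}\bH_k^\He\bH_k+\alpha\bI_M\RB^{-1}$. The first step is purely structural: since the rows of $\bH$ are the $\bh_i^\He$, one has $\bH^\He\bH=\sum_{i=1}^K\bh_i\bh_i^\He$, and deleting the $k$th row removes exactly the term $\bh_k\bh_k^\He$, so $\bH^\He\bH=\bH_k^\He\bH_k+\bh_k\bh_k^\He$. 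Dividing by $M$ and adding $\alpha\bI_M$ gives the key rank-one identity at the level of inverses, $\bQ^{-1}=\bQ_k^{-1}+\frac{1}{M}\bh_k\bh_k^\He$.

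Next I invoke (or, to keep the argument self-contained, verify directly) the Sherman--Morrison formula with $\bA=\bQ_k^{-1}$ and the rank-one term $\frac{1}{M}\bh_k\bh_k^\He$, which yields the first identity $\bQ=\bQ_k-\frac{1}{M}\frac{\bQ_k\bh_k\bh_k^\He\bQ_k}{1+\frac{1}{M}\bh_k^\He\bQ_k\bh_k}$. The direct check amounts to multiplying the proposed right-hand side by $\bQ^{-1}=\bQ_k^{-1}+\frac{1}{M}\bh_k\bh_k^\He$ and collecting terms: writing the scalar $c\triangleq\frac{1}{M}\bh_k^\He\bQ_k\bh_k$, the cross terms combine as $1-\frac{1}{1+c}-\frac{c}{1+c}=0$ and the product collapses to $\bI_M$, confirming the claim.

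For the second identity I left-multiply the first by $\bh_k^\He$ and use that $\bh_k^\He\bQ_k\bh_k$ is a scalar: $\bh_k^\He\bQ=\bh_k^\He\bQ_k-\frac{c}{1+c}\bh_k^\He\bQ_k=\frac{1}{1+c}\bh_k^\He\bQ_k$, which is precisely $\frac{\bh_k^\He\bQ_k}{1+\frac{1}{M}\bh_k^\He\bQ_k\bh_k}$. The only points needing care --- and there is no real obstacle here, the result being elementary linear algebra --- are well-definedness: $\bQ_k^{-1}=\frac{1}{M}\bH_k^\He\bH_k+\alpha\bI_M\succ 0$ since $\alpha>0$, so $\bQ_k$ exists, and the denominator satisfies $1+\frac{1}{M}\bh_k^\He\bQ_k\bh_k\geq 1>0$ because $\bQ_k\succ 0$ makes the quadratic form nonnegative. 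Hence no division by zero occurs and both identities hold for every $k$.
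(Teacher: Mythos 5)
Your proof is correct, and it is essentially the canonical argument: the paper itself gives no proof of this lemma (it is recalled from \cite{Muller2013,Couillet2011}), and the derivation in those references is exactly your Sherman--Morrison route, namely the rank-one update $\bQ^{-1}=\bQ_k^{-1}+\frac{1}{M}\bh_k\bh_k^{\He}$ followed by the matrix inversion lemma, with the second identity obtained by left-multiplying the first by $\bh_k^{\He}$ and absorbing the scalar $c=\frac{1}{M}\bh_k^{\He}\bQ_k\bh_k$. You were also right to repair the normalization: as literally stated, $\bQ=\LB \bH^{\He}\bH+\alpha\I_M\RB^{-1}$ is inconsistent with the $\frac{1}{M}$ factors in the two identities, and your reading $\bQ=\LB \frac{1}{M}\bH^{\He}\bH+\alpha\I_M\RB^{-1}$ is the one consistent with Theorem~\ref{fundamental_theorem}, with \eqref{eq:SM_8}, and with how the lemma is invoked in the proof of Theorem~\ref{theorem}. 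Your well-definedness remarks ($\alpha>0$ guarantees invertibility, and $\bQ_k\succ 0$ makes the denominator at least $1$) cover the only non-mechanical points, so the argument is complete.
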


\begin{lemma} [\cite{Muller2013,Couillet2011}]
\label{lemma_trace}
Let $(\bA_N)_{N\geq 1}, \bA_N\in \mathbb{C}^{N\times N}$ be a sequence of matrices such that $\lim\sup\|\bA_N\|<\infty$, and $(\xv_N)_{N\geq 1}, \xv_N\in \mathbb{C}^{N\times 1}$ be a sequence of random vectors of i.i.d. entries of zero mean, unit variance, and finite eighth order moment independent of~$\bA_N$. Then,
\begin{equation}
\frac{1}{N}\xv_N^{\He}\bA_N\xv_N-\frac{1}{N}\trace\LB \bA_N\RB \xrightarrow[N\rightarrow\infty]{a.s.}0.
\end{equation}
\end{lemma}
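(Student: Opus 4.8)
The plan is to prove the almost sure convergence by the moment method: I would bound a sufficiently high moment of the centered quadratic form and then invoke the Borel--Cantelli lemma.

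First, I would center the quadratic form; since $\bA_N$ is independent of $\xv_N$, I may condition on $\bA_N$ and treat it as a deterministic matrix with $\norm{\bA_N}$ uniformly bounded. Writing $\xv_N^{\He}\bA_N\xv_N=\sum_{i,j}\bar{x}_i(\bA_N)_{ij}x_j$ and using that the entries are i.i.d.\ with zero mean and unit variance, so that $\E[\bar{x}_ix_j]=\delta_{ij}$, gives $\E[\xv_N^{\He}\bA_N\xv_N]=\trace(\bA_N)$. Hence, setting $Y_N\triangleq\xv_N^{\He}\bA_N\xv_N-\trace(\bA_N)$, it suffices to show that $Y_N/N\to 0$ almost surely.

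Second, and this is the technical core, I would establish the fourth-order moment bound
\[
\E\big[|Y_N|^4\big]\leq C\LB\big(\trace(\bA_N\bA_N^{\He})\big)^2+\trace\big((\bA_N\bA_N^{\He})^2\big)\RB,
\]
where the constant $C$ depends only on the fourth and eighth moments of the entries. The derivation expands $\E[|Y_N|^4]$ as a sum over eight indices, groups the terms according to the coincidence pattern of those indices, and discards---by independence together with the vanishing mean---every term in which some index appears in isolation; finiteness of the eighth moment is exactly what is needed to control the surviving fully-paired terms. Invoking $\limsup\norm{\bA_N}<\infty$, one has $\trace(\bA_N\bA_N^{\He})=\norm{\bA_N}_{\Fro}^2\leq N\norm{\bA_N}^2=O(N)$ and $\trace((\bA_N\bA_N^{\He})^2)\leq N\norm{\bA_N}^4=O(N)$, whence $\E[|Y_N|^4]=O(N^2)$.

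Finally, I would combine this bound with Markov's inequality: for every $\varepsilon>0$,
\[
\mathbb{P}\LB\tfrac{1}{N}|Y_N|>\varepsilon\RB\leq\frac{\E[|Y_N|^4]}{\varepsilon^4N^4}=O\LB N^{-2}\RB,
\]
which is summable in $N$. The Borel--Cantelli lemma then forces the events $\{\,|Y_N/N|>\varepsilon\,\}$ to occur only finitely often, so that $\limsup_N|Y_N/N|\leq\varepsilon$ almost surely; letting $\varepsilon\to 0$ along a countable sequence yields $Y_N/N\to 0$ almost surely, which is the claim. The hard part will be the combinatorial fourth-moment bound in the second step: one must enumerate carefully which index-coincidence patterns survive the expectation and check that the leading contribution scales as $N^2$ rather than $N^3$, which is precisely where the boundedness of $\norm{\bA_N}$ and the finite eighth moment enter the argument.
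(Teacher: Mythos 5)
The paper offers no proof of this lemma at all---it is quoted verbatim from the cited references---and your argument is precisely the standard one found there (the moment bound of Bai--Silverstein's Lemma B.26 specialised to $p=4$, where the finite eighth moment enters exactly as you say, followed by Markov's inequality and Borel--Cantelli). Your proposal is correct; the one detail worth tightening is that when $\bA_N$ is random your conditional constant $C$ depends on the random quantity $\limsup_N\|\bA_N\|$, which is handled by the routine device of running the Borel--Cantelli argument on each event $\{\sup_N\|\bA_N\|\leq m\}$, $m\in\mathbb{N}$, and taking the countable union.
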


\begin{lemma}[\cite{Muller2013,Couillet2011}]
\label{lemma_zero}
Let $(\bA_N)_{N\geq 1}, \bA_N\in \mathbb{C}^{N\times N}$ be a sequence of matrices such that $\lim\sup\|\bA_N\|<\infty$, and $\xv_N,\yv_N$ be random, mutually independent with i.i.d. entries of zero mean, unit variance, finite eighth order moment, and independent of~$\bA_N$. Then, 
\begin{equation}
\frac{1}{N}\xv_N^{\He}\bA_N\yv_N\xrightarrow[N\rightarrow\infty]{a.s.}0.
\end{equation}
\end{lemma}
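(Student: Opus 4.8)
The plan is to prove almost-sure convergence by controlling a sufficiently high moment of $z_N \triangleq \frac{1}{N}\xv_N^{\He}\bA_N\yv_N$ and invoking the Borel--Cantelli lemma. Throughout I condition on $\bA_N$, which I may treat as deterministic with $\|\bA_N\|\leq C_0$ for all $N$ by hypothesis; the randomness is carried by $\xv_N$ and $\yv_N$. Since $\xv_N$ and $\yv_N$ are mutually independent, zero-mean, and independent of $\bA_N$, one immediately gets $\E[z_N\mid\bA_N]=0$, and a one-line computation using $\E[\yv_N\yv_N^{\He}]=\I_N$ followed by $\E[\xv_N\xv_N^{\He}]=\I_N$ gives $\E[|z_N|^2\mid\bA_N]=\frac{1}{N^2}\trace(\bA_N\bA_N^{\He})\leq \|\bA_N\|^2/N\to 0$, establishing $L^2$ (hence in-probability) convergence. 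The substance of the proof is upgrading this to almost-sure convergence.

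First I would establish a fourth-moment bound $\E[|z_N|^4\mid\bA_N]=O(N^{-2})$. To do so I condition further on $\xv_N$ and write $z_N=\frac{1}{N}\sum_{j}u_j\,(y_N)_j$ with $u_j\triangleq(\xv_N^{\He}\bA_N)_j$, so that $z_N$ becomes a weighted sum of the independent zero-mean entries of $\yv_N$. A Rosenthal/Marcinkiewicz--Zygmund inequality for such sums yields $\E\big[\big|\sum_j u_j (y_N)_j\big|^4\mid\xv_N\big]\leq C\big((\sum_j|u_j|^2)^2+\sum_j|u_j|^4\big)$, where $C$ depends only on the fourth moment of the entries. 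Using $\sum_j|u_j|^2=\xv_N^{\He}\bA_N\bA_N^{\He}\xv_N$ and $\sum_j|u_j|^4\leq(\sum_j|u_j|^2)^2$, this collapses to the bound $C(\xv_N^{\He}\bA_N\bA_N^{\He}\xv_N)^2$.

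Next I take the expectation over $\xv_N$. Setting $\bB\triangleq\bA_N\bA_N^{\He}$ (Hermitian, $\|\bB\|\leq\|\bA_N\|^2$), the standard second-moment bound for quadratic forms gives $\E[(\xv_N^{\He}\bB\xv_N)^2]\leq C'((\trace\bB)^2+\trace(\bB^2))$, and bounding $\trace\bB\leq N\|\bA_N\|^2$ and $\trace(\bB^2)\leq N\|\bA_N\|^4$ produces $\E[(\xv_N^{\He}\bB\xv_N)^2]\leq C'' N^2\|\bA_N\|^4$. Combining the two estimates yields $\E[|z_N|^4\mid\bA_N]\leq C'''\|\bA_N\|^4/N^2\leq C'''C_0^4/N^2$, so that $\E[|z_N|^4]=O(N^{-2})$ after averaging over $\bA_N$. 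Markov's inequality then gives $P(|z_N|>\varepsilon)\leq \E[|z_N|^4]/\varepsilon^4=O(N^{-2})$, which is summable; Borel--Cantelli finishes the argument. The stated finite eighth-order moment is more than enough for the fourth-order entry moments used here; it allows, if one prefers, the identical computation at order eight, yielding the sharper $O(N^{-4})$ bound.

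The main obstacle is the clean derivation of the two moment inequalities with constants independent of $N$: one must verify the Rosenthal-type bound for the conditional fourth moment of $\sum_j u_j(y_N)_j$ and the second-moment bound for $\xv_N^{\He}\bB\xv_N$, carefully tracking which diagonal versus off-diagonal index coincidences survive after using zero mean and independence, and confirming that every surviving sum of matrix entries is controlled by $\trace\bB$ or $\trace(\bB^2)$ and hence by the spectral norm. An equivalent and perhaps more elementary route is to expand $\E[|z_N|^4]$ directly as a sum over index quadruples, discard all terms carrying an unmatched $\xv_N$- or $\yv_N$-index by the zero-mean property, and observe that only $O(N^2)$ quadruples survive, each bounded by a fixed product of entry moments times matrix entries whose aggregate is $O(N^2\|\bA_N\|^4)$; this avoids quoting Rosenthal explicitly at the cost of more bookkeeping.
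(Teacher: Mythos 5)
The paper offers no proof of this lemma to compare against: it is stated in the appendix as a classical result quoted from \cite{Muller2013,Couillet2011}. Judged on its own merits, your argument is correct and is essentially the standard literature proof (a high-moment bound on $z_N=\frac{1}{N}\xv_N^{\He}\bA_N\yv_N$ of summable order, then Markov and Borel--Cantelli). Your two-step factorization --- condition on $\xv_N$, apply a Rosenthal/Marcinkiewicz--Zygmund bound to the weighted sum $\sum_j u_j (y_N)_j$, then bound $\E\bigl[(\xv_N^{\He}\bA_N\bA_N^{\He}\xv_N)^2\bigr]$ by $C\bigl((\trace \bB)^2+\trace(\bB^2)\bigr)$ --- is in fact slightly cleaner than the usual direct expansion \`a la Bai--Silverstein (their Lemma B.26), and it correctly exposes that for the \emph{bilinear} form finite fourth-order entry moments already give the $O(N^{-2})$ bound, whereas the quadratic-form analogue $\E|\xv^{\He}\bA\xv-\trace\bA|^4$ genuinely requires eighth moments; this explains why the hypothesis is stated as it is, as you note. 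Three minor points to tighten: (i) $\limsup\|\bA_N\|<\infty$ only yields $\|\bA_N\|\leq C_0$ for all $N\geq N_0$ (with $C_0$, $N_0$ possibly random if the $\bA_N$ are), not for all $N$; this is harmless since almost-sure convergence is a tail property, but the phrase ``for all $N$ by hypothesis'' should be amended accordingly. (ii) Since the entries are complex but not assumed circularly symmetric, the expansion of $\E\bigl[|\sum_j u_j (y_N)_j|^4 \mid \xv_N\bigr]$ contains pairings through $\E[(y_N)_1^2]$; these are also dominated by $(\sum_j |u_j|^2)^2$, so your stated bound survives, but it deserves a sentence. (iii) The final Borel--Cantelli step should quantify over a countable set of $\varepsilon$ (e.g.\ rationals) to conclude $z_N\to 0$ a.s.; this is routine. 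With these cosmetic repairs the proof is complete and matches the route taken in the cited references.
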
 

\begin{lemma}  [\cite{Wagner2012,Couillet2011}]
\label{lemma_rank1}
Let $\bQ$ and $\bQ_k$ be as given in Lemma~\ref{lemma_resolvent}. Then, for any matrix $\bA$, we have
\begin{equation}
\trace\LB \bA\LB \bQ-\bQ_k\RB\RB\leq \|\bA\|_2.
\end{equation}
\end{lemma}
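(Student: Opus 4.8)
The plan is to exploit the rank-one structure of $\bQ-\bQ_k$ furnished by the first resolvent identity of Lemma~\ref{lemma_resolvent}, and to collapse the trace onto a single scalar quadratic form that is directly controlled by the operator norm of $\bA$. First I would rewrite that identity as
\begin{equation}
\bQ-\bQ_k=-\frac{1}{M}\frac{\bQ_k\bh_k\bh_k^{\He}\bQ_k}{1+\frac{1}{M}\bh_k^{\He}\bQ_k\bh_k},
\end{equation}
which displays $\bQ-\bQ_k$ as a negative semidefinite matrix of rank one. Because $\alpha>0$, the matrix $\bQ_k$ is positive definite, so $\bh_k^{\He}\bQ_k\bh_k\ge 0$ and the denominator obeys $1+\frac{1}{M}\bh_k^{\He}\bQ_k\bh_k\ge 1>0$; no division-by-small-quantity issue can occur, and this lower bound on the denominator will be used crucially later.

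The second step is to take the trace against $\bA$ and apply the cyclic property together with the identity $\trace(\bA\bu\bu^{\He})=\bu^{\He}\bA\bu$, which reduces the whole expression to a scalar,
\begin{equation}
\trace\LB\bA\LB\bQ-\bQ_k\RB\RB=-\frac{1}{M}\frac{\bh_k^{\He}\bQ_k\bA\bQ_k\bh_k}{1+\frac{1}{M}\bh_k^{\He}\bQ_k\bh_k}.
\end{equation}
Setting $\bu\triangleq\bQ_k\bh_k$, the numerator equals $\bu^{\He}\bA\bu$, which I would bound using $|\bu^{\He}\bA\bu|\le\|\bA\|_2\,\|\bu\|^2=\|\bA\|_2\,\bh_k^{\He}\bQ_k^2\bh_k$. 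This isolates the only appearance of $\bA$ and reduces the claim to bounding a purely spectral ratio that is independent of $\bA$.

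The decisive third step is to bound that ratio uniformly in $M$. From the definition of $\bQ_k$, whose eigenvalues are of the form $(\lambda+\alpha)^{-1}$ with $\lambda\ge 0$ and therefore lie in $(0,\alpha^{-1}]$, one has $\bQ_k\preceq\alpha^{-1}\I$ and hence $\bQ_k^2\preceq\alpha^{-1}\bQ_k$, giving $\bh_k^{\He}\bQ_k^2\bh_k\le\alpha^{-1}\bh_k^{\He}\bQ_k\bh_k$. Writing $t\triangleq\frac{1}{M}\bh_k^{\He}\bQ_k\bh_k\ge 0$, the ratio is then at most $\alpha^{-1}\,t/(1+t)\le\alpha^{-1}$, so that $|\trace(\bA(\bQ-\bQ_k))|\le\alpha^{-1}\|\bA\|_2$, which in particular yields the one-sided inequality in the statement.

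The main obstacle is precisely this last ratio bound: in isolation $\bh_k^{\He}\bQ_k^2\bh_k$ need not stay bounded as $M\to\infty$, so the argument must combine two effects at once — the self-normalizing denominator $1+t$ absorbing the growth of $t$, and the spectral ordering $\bQ_k\preceq\alpha^{-1}\I$ converting the squared resolvent into a single resolvent. I should remark that the natural constant obtained this way is $\alpha^{-1}$; the clean form $\|\bA\|_2$ quoted in the lemma corresponds to the regime $\alpha\ge 1$ (or to absorbing $\alpha^{-1}$ into the normalization), and for the purposes of the main proof only the $M$-independence of the bound is ever needed, since the lemma is invoked on the $M^{-1}$-normalized trace, which consequently vanishes in the large-system limit.
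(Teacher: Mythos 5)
Your proof is correct, and it is the standard rank-one perturbation argument: the paper itself does not prove this lemma but imports it from \cite{Wagner2012,Couillet2011}, where the proof is precisely your chain --- the resolvent identity of Lemma~\ref{lemma_resolvent}, the reduction $\trace\LB\bA\bu\bu^{\He}\RB=\bu^{\He}\bA\bu$, the operator-norm bound $|\bu^{\He}\bA\bu|\le\|\bA\|_2\|\bu\|^2$, and $\bQ_k^2\preceq\alpha^{-1}\bQ_k$ combined with $t/(1+t)\le 1$. Your closing caveat is also well taken and not mere pedantry: the constant this argument yields, $\alpha^{-1}\|\bA\|_2$, is in fact the right one, and the statement as quoted cannot hold uniformly in $\alpha$ --- e.g.\ for $M=1$ with $\bH_k$ empty, $\bA=-1$ and $|h_1|\to\infty$, the left-hand side tends to $\alpha^{-1}$, which exceeds $\|\bA\|_2=1$ whenever $\alpha<1$ --- so the lemma as printed is literally valid only for $\alpha\ge 1$ or after absorbing the fixed factor $\alpha^{-1}$. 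This inaccuracy is harmless for the paper, as you note, since every invocation concerns quantities of the form $\frac{1}{M}\trace\LB\bU\LB\bQ-\bQ_k\RB\RB$ with $\alpha>0$ fixed, for which any $M$-independent constant gives the needed almost-sure vanishing.
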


\begin{lemma} [\cite{Wagner2012,Couillet2011}]
\label{lemma_c_0}
Let $\bU,\bV,\bm{\Theta}$ be of uniformly bounded spectral norm with respect to $N$ and let $\bV$ be invertible. Further, define $\xv\triangleq \bm{\Theta}\bm{\bz}$ and $\yv\triangleq \bm{\Theta}\bm{\bq}$ where $\bz,\bq\in \mathbb{C}^{N}$ have i.i.d. complex entries of zero mean, variance $1/N$	and finite $8$th order moment and be mutually independent as well as independent of~$\bU,\bV$. Define $c_0,c_1,c_2\in \mathbb{R}^{+}$ such that $c_0c_1-c_2^2>0$, and let $u\triangleq \frac{1}{N}\trace \LB\bm{\Theta}\bV^{-1}\RB$ and $u'\triangleq \frac{1}{N}\trace \LB\bm{\Theta}\bU\bV^{-1}\RB$. Then we have:
\begin{align}
&\xv^{\He}\bU \LB \bV+c_0\xv \xv^{\He}+c_1\yv \yv^{\He}+c_2\xv \yv^{\He}+c_2\yv \xv^{\He} \RB^{-1} \xv-\frac{u'\LB 1+c_1 u \RB}{ (c_0c_1-c_2^2)u^2+(c_0+c_1)u+1}\rightarrow 0
\end{align} 
as well as
\begin{align}
&\xv^{\He}\bU \LB \bV+c_0\xv \xv^{\He}+c_1\yv \yv^{\He}+c_2\xv \yv^{\He}+c_2\yv \xv^{\He} \RB^{-1} \yv-\frac{-c_2 uu'}{ (c_0c_1-c_2^2)u^2+(c_0+c_1)u+1}\rightarrow 0
\end{align} 
\end{lemma}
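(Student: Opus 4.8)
The plan is to treat the four rank-one terms in the perturbation jointly as a single rank-two update of $\bV$, invert it exactly with the Woodbury matrix identity, and only then strip the residual randomness term by term using Lemmas~\ref{lemma_trace} and~\ref{lemma_zero}. Concretely, I would first write
\begin{equation}
c_0\xv\xv^\He+c_1\yv\yv^\He+c_2\xv\yv^\He+c_2\yv\xv^\He=\bW\bm{\Phi}\bW^\He,\qquad \bW\triangleq[\xv,\yv],\quad \bm{\Phi}\triangleq\begin{bmatrix}c_0&c_2\\c_2&c_1\end{bmatrix},
\end{equation}
noting that $\bm{\Phi}$ is invertible precisely because $c_0c_1-c_2^2>0$. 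The Woodbury identity then gives
\begin{equation}
\LB\bV+\bW\bm{\Phi}\bW^\He\RB^{-1}=\bV^{-1}-\bV^{-1}\bW\LB\bm{\Phi}^{-1}+\bW^\He\bV^{-1}\bW\RB^{-1}\bW^\He\bV^{-1},
\end{equation}
so both quadratic forms in the statement reduce to $\xv^\He\bU\bV^{-1}\xv$ (resp.\ $\xv^\He\bU\bV^{-1}\yv$) corrected by a term built from the $2\times2$ Gram matrix $\bW^\He\bV^{-1}\bW$ and the coupling quantities $\xv^\He\bU\bV^{-1}\bW$, $\bW^\He\bV^{-1}\xv$, $\bW^\He\bV^{-1}\yv$.

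Since $\bU,\bV,\bm{\Theta}$ are independent of $\bz,\bq$, every constituent is a quadratic form to which the RMT lemmas apply directly. Lemma~\ref{lemma_trace} yields $\xv^\He\bV^{-1}\xv\asymp\yv^\He\bV^{-1}\yv\asymp u$ and $\xv^\He\bU\bV^{-1}\xv\asymp u'$, while Lemma~\ref{lemma_zero} kills every cross term, e.g.\ $\xv^\He\bV^{-1}\yv\asymp0$ and $\xv^\He\bU\bV^{-1}\yv\asymp0$. Hence $\bW^\He\bV^{-1}\bW\asymp u\,\bI_2$, $\bW^\He\bV^{-1}\xv\asymp[u,0]^\Tr$, $\bW^\He\bV^{-1}\yv\asymp[0,u]^\Tr$, and $\xv^\He\bU\bV^{-1}\bW\asymp[u',0]$, so the entire correction is governed by the deterministic kernel $\bm{\Phi}^{-1}+u\bI_2$.

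It then remains to read off two entries of the inverse of this fixed $2\times2$ matrix. One checks that $\det(\bm{\Phi}^{-1}+u\bI_2)=[(c_0c_1-c_2^2)u^2+(c_0+c_1)u+1]/(c_0c_1-c_2^2)$, which supplies exactly the denominator appearing in the statement. The first claim collects $u'-u'u\,[(\bm{\Phi}^{-1}+u\bI_2)^{-1}]_{11}$, which after cancellation collapses to $u'(1+c_1u)/[(c_0c_1-c_2^2)u^2+(c_0+c_1)u+1]$; the second collects $-u'u\,[(\bm{\Phi}^{-1}+u\bI_2)^{-1}]_{12}$, which collapses to $-c_2uu'/[(c_0c_1-c_2^2)u^2+(c_0+c_1)u+1]$. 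These are precisely the two asserted limits.

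The step I expect to be the main obstacle is transferring the a.s.\ convergence through the inner matrix inverse $(\bm{\Phi}^{-1}+\bW^\He\bV^{-1}\bW)^{-1}$, since the random Gram matrix sits inside the inversion before being multiplied by the likewise random coupling vectors. The clean resolution is to establish $\bW^\He\bV^{-1}\bW\asymp u\,\bI_2$ first and to note that $\bm{\Phi}^{-1}\succ0$ (from $c_0,c_1>0$ and $c_0c_1-c_2^2>0$) while, in the resolvent regime of interest, $\bV^{-1}\succeq0$ forces $\bW^\He\bV^{-1}\bW\succeq0$ and $u\geq0$; hence $\bm{\Phi}^{-1}+\bW^\He\bV^{-1}\bW\succeq\bm{\Phi}^{-1}$ is uniformly bounded away from singularity. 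Continuity of matrix inversion on this nondegenerate set then lets the elementwise limit pass through, after which the product of the jointly convergent factors yields the deterministic kernel, completing both claims.
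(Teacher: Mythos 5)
Your proposal is correct, but note that the paper never proves Lemma~\ref{lemma_c_0}: it is imported as an adapted result from \cite{Wagner2012,Couillet2011}, so the only comparison available is with the derivation in those references, which peels off the four rank-one perturbations sequentially via Sherman--Morrison/resolvent identities (in the spirit of Lemma~\ref{lemma_resolvent}) and assembles the quadratic denominator through a chain of scalar expansions. Your single rank-two Woodbury step, $\LB\bV+\bW\bm{\Phi}\bW^{\He}\RB^{-1}=\bV^{-1}-\bV^{-1}\bW\LB\bm{\Phi}^{-1}+\bW^{\He}\bV^{-1}\bW\RB^{-1}\bW^{\He}\bV^{-1}$ with $\bW=[\xv,\yv]$ and $\bm{\Phi}$ invertible precisely because $c_0c_1-c_2^2>0$, is a genuinely different and cleaner route: once Lemmas~\ref{lemma_trace} and~\ref{lemma_zero} give $\bW^{\He}\bV^{-1}\bW\asymp u\I_2$, $\bW^{\He}\bV^{-1}\xv\asymp[u,0]^{\Tr}$, $\bW^{\He}\bV^{-1}\yv\asymp[0,u]^{\Tr}$ and $\xv^{\He}\bU\bV^{-1}\bW\asymp[u',0]$, both claims reduce to two entries of the deterministic kernel $\LB\bm{\Phi}^{-1}+u\I_2\RB^{-1}$. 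Writing $D\triangleq(c_0c_1-c_2^2)u^2+(c_0+c_1)u+1$, your determinant computation $\det\LB\bm{\Phi}^{-1}+u\I_2\RB=D/(c_0c_1-c_2^2)$ checks out, and the $(1,1)$ and $(1,2)$ entries indeed collapse $u'-u'u\,[(\bm{\Phi}^{-1}+u\I_2)^{-1}]_{11}$ to $u'(1+c_1u)/D$ and $-u'u\,[(\bm{\Phi}^{-1}+u\I_2)^{-1}]_{12}$ to $-c_2uu'/D$; a structural bonus of your route is that it makes transparent why the same denominator $D$ governs both limits, which the sequential rank-one approach obscures. Two caveats are worth retaining. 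First, passing almost-sure convergence through the inner $2\times 2$ inverse requires uniform nonsingularity of $\bm{\Phi}^{-1}+\bW^{\He}\bV^{-1}\bW$; your argument via $\bm{\Phi}^{-1}\succ 0$ and $\bW^{\He}\bV^{-1}\bW\succeq 0$ is right, but the positivity of $\bV^{-1}$ is \emph{not} implied by the lemma's stated hypotheses (only invertibility and bounded spectral norm of $\bV$) --- it holds in every application in the paper because $\bV$ there is a regularized resolvent bounded below by $\alpha\I$, and you correctly flagged this. Second, the same remark applies to the bound $\|\bV^{-1}\|\leq 1/\alpha$ needed for the trace lemmas, which again comes from the resolvent structure rather than from the hypotheses as written.
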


%%%%%%%%%%%%%%%%%%%%%%%%%%%%%%%%%%%%%%%%%%%%%%%%%%%%%%%%%%%%%%%%%%%%%%%%%%%%%%%%%%%%%%%%%%%%%%%%%%%%%%%%%%%%%%%%%%%%%%
%%%%%%%%%%%%%%%%%%%%%%%%%%%%%%%%%%%%%%%%%%%%%%%%%%%%%%%%%%%%%%%%%%%%%%%%%%%%%%%%%%%%%%%%%%%%%%%%%%%%%%%%%%%%%%%%%%%%%%
%%%%%%%%%%%%%%%%%%%%%%%%%%%%%%%%%%%%%%%%%%%%%%%%%%%%%%%%%%%%%%%%%%%%%%%%%%%%%%%%%%%%%%%%%%%%%%%%%%%%%%%%%%%%%%%%%%%%%%
%%%%%%%%%%%%%%%%%%%%%%%%%%%%%%%%%%%%%%%%%%%%%%%%%%%%%%%%%%%%%%%%%%%%%%%%%%%%%%%%%%%%%%%%%%%%%%%%%%%%%%%%%%%%%%%%%%%%%% 
\subsection{New Lemmas} 
%%%%%%%%%%%%%%%%%%%%%%%%%%%%%%%%%%%%%%%%%%%%%%%%%%%%%%%%%%%%%%%%%%%%%%%%%%%%%%%%%%%%%%%%%%%%%%%%%%%%%%%%%%%%%%%%%%%%%%
%%%%%%%%%%%%%%%%%%%%%%%%%%%%%%%%%%%%%%%%%%%%%%%%%%%%%%%%%%%%%%%%%%%%%%%%%%%%%%%%%%%%%%%%%%%%%%%%%%%%%%%%%%%%%%%%%%%%%%
%%%%%%%%%%%%%%%%%%%%%%%%%%%%%%%%%%%%%%%%%%%%%%%%%%%%%%%%%%%%%%%%%%%%%%%%%%%%%%%%%%%%%%%%%%%%%%%%%%%%%%%%%%%%%%%%%%%%%%
%%%%%%%%%%%%%%%%%%%%%%%%%%%%%%%%%%%%%%%%%%%%%%%%%%%%%%%%%%%%%%%%%%%%%%%%%%%%%%%%%%%%%%%%%%%%%%%%%%%%%%%%%%%%%%%%%%%%%%

%%%%%%%%%%%%%%%%%%%%%%%%%%%%%%%%%%%%%%%%%%%%%%%%%%%%%%%%%%%%%%%%%%%%%%%%%%%%%%%%%%%%%%%%%%%%%%%%%%%%%%%%%%%%%%%%%%%%%%
%%%%%%%%%%%%%%%%%%%%%%%%%%%%%%%%%%%%%%%%%%%%%%%%%%%%%%%%%%%%%%%%%%%%%%%%%%%%%%%%%%%%%%%%%%%%%%%%%%%%%%%%%%%%%%%%%%%%%%
%%%%%%%%%%%%%%%%%%%%%%%%%%%%%%%%%%%%%%%%%%%%%%%%%%%%%%%%%%%%%%%%%%%%%%%%%%%%%%%%%%%%%%%%%%%%%%%%%%%%%%%%%%%%%%%%%%%%%%
%%%%%%%%%%%%%%%%%%%%%%%%%%%%%%%%%%%%%%%%%%%%%%%%%%%%%%%%%%%%%%%%%%%%%%%%%%%%%%%%%%%%%%%%%%%%%%%%%%%%%%%%%%%%%%%%%%%%%%
%%%%%%%%%%%%%%%%%%%%%%%%% FIRST LEMMA     %%%%%%%%%%%%%%%%%%%%%%%%%%%%%%%%%%%%%%%%%%%%%%%%%%%%%%%%%%%%%%%%%%%%%%%%%%%%
%%%%%%%%%%%%%%%%%%%%%%%%%%%%%%%%%%%%%%%%%%%%%%%%%%%%%%%%%%%%%%%%%%%%%%%%%%%%%%%%%%%%%%%%%%%%%%%%%%%%%%%%%%%%%%%%%%%%%%
%%%%%%%%%%%%%%%%%%%%%%%%%%%%%%%%%%%%%%%%%%%%%%%%%%%%%%%%%%%%%%%%%%%%%%%%%%%%%%%%%%%%%%%%%%%%%%%%%%%%%%%%%%%%%%%%%%%%%%
%%%%%%%%%%%%%%%%%%%%%%%%%%%%%%%%%%%%%%%%%%%%%%%%%%%%%%%%%%%%%%%%%%%%%%%%%%%%%%%%%%%%%%%%%%%%%%%%%%%%%%%%%%%%%%%%%%%%%%
%%%%%%%%%%%%%%%%%%%%%%%%%%%%%%%%%%%%%%%%%%%%%%%%%%%%%%%%%%%%%%%%%%%%%%%%%%%%%%%%%%%%%%%%%%%%%%%%%%%%%%%%%%%%%%%%%%%%%%

\begin{lemma}
\label{lemma1}
Let~$\bH'$ and $\bH''$ be two imperfect multi-user channel estimates as described in Section~\ref{se:SM}. Let $\bQ'\triangleq\LB\frac{\bH'^{\He}\bH'}{M} +\alpha\I_M\RB^{-1}$ and  $\bQ''\triangleq\LB\frac{\bH''^{\He}\bH''}{M} +\alpha\I_M\RB^{-1}$ with $\alpha>0$. Let $\bA \in \mathbb{C}^{M\times M}$ be of uniformly bounded spectral norm with respect to $M$. Then,
\begin{equation}
\begin{aligned}
&\frac{1}{M^2} \trace \LB\bA \bQ' \bH'^{\He}\bH'' \bQ''\RB-\frac{ \frac{1}{M}\trace \LB \bA \RB\delta^2\sqrt{c_0'c_0''}}{\beta (1+\delta)}  \LSB  (1-\delta) + \frac{\LB \delta^2+\sqrt{c_0'c_0''} Y_0\RB}{\LB 1+ \delta\RB}\RSB\xrightarrow{a.s.}0
\end{aligned}
\end{equation}
with $c_0'\triangleq 1-\sigma'^2$,  $c_0''\triangleq  1-\sigma''^2$, $\bQ_o$ defined as in Theorem~\ref{fundamental_theorem}, and $Y_0$ defined as
\begin{equation}
Y_0\triangleq\frac {\frac{ \sqrt{c_0'c_0''}\delta^2}{\beta (1+\delta)}  \LSB  (1-\delta) + \frac{  \delta^2  }{\LB 1+ \delta\RB}\RSB}{ \LB 1-\frac{1}{\beta}  c_0'c_0''\frac{\delta^2}{\LB 1+ \delta\RB^2}   \RB }.
\end{equation}
Note that in the case where $\bA=\bI_M$, the result simplifies to
\begin{equation}
\frac{1}{M^2} \trace \LB \bQ' \bH'^{\He}\bH'' \bQ''\RB-Y_0\xrightarrow{a.s.}0 .
\end{equation}
%%%%%%%Also, setting $\sigma^{(j)}=\sigma^{(j)}=0$ yields
%%%%%%%\begin{equation}
%%%%%%%\frac{1}{M^2} \trace \LB \bQ \bH^{\He}\bH \bQ\RB-{\frac{  \frac{1}{M}\trace \LB \bT^2 \RB}{\beta (1+\delta)}  \LSB  (1-\delta) + \frac{  \delta^2  }{\LB 1+ \delta\RB}\RSB}{ \LB 1-\frac{1}{\beta} }\frac{\frac{1}{M} \trace \LB \bT^2\RB}{\LB 1+ \delta\RB^2}   \RB }\xrightarrow{a.s.}0 .
%%%%%%%\end{equation}
\end{lemma}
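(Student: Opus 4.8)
The plan is to reduce the bilinear trace to a trace of the product of the two correlated resolvents and then solve a fixed point. First I would write $\bH'^{\He}\bH''=\sum_{k=1}^K\hat{\bh}'_k(\hat{\bh}''_k)^{\He}$, with $\hat{\bh}'_k=\sqrt{c_0'}\,\bh_k+\sigma'\bm{\delta}'_k$ and $\hat{\bh}''_k=\sqrt{c_0''}\,\bh_k+\sigma''\bm{\delta}''_k$ the $k$th rows of the two estimates, where $\bh_k,\bm{\delta}'_k,\bm{\delta}''_k$ are mutually independent and each of $\hat{\bh}'_k,\hat{\bh}''_k$ has i.i.d.\ unit-variance entries (since $c_0'+\sigma'^2=1$). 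This gives $\frac{1}{M^2}\trace(\bA\bQ'\bH'^{\He}\bH''\bQ'')=\frac{1}{M^2}\sum_{k=1}^K(\hat{\bh}''_k)^{\He}\bQ''\bA\bQ'\hat{\bh}'_k$, and the whole difficulty is that $\hat{\bh}'_k$ and $\hat{\bh}''_k$ are coupled through the common term $\bh_k$.

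Next I would decouple each summand with the resolvent identities of Lemma~\ref{lemma_resolvent}, pulling $\hat{\bh}'_k$ out of $\bQ'$ and $\hat{\bh}''_k$ out of $\bQ''$. Writing $\bQ'_{[k]},\bQ''_{[k]}$ for the leave-one-out resolvents (which are simultaneously independent of the $k$th rows of \emph{both} channels, since they involve only indices $j\neq k$), one gets $(\hat{\bh}''_k)^{\He}\bQ''\bA\bQ'\hat{\bh}'_k=\frac{(\hat{\bh}''_k)^{\He}\bQ''_{[k]}\bA\bQ'_{[k]}\hat{\bh}'_k}{(1+\frac{1}{M}(\hat{\bh}''_k)^{\He}\bQ''_{[k]}\hat{\bh}''_k)(1+\frac{1}{M}(\hat{\bh}'_k)^{\He}\bQ'_{[k]}\hat{\bh}'_k)}$. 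Each denominator tends to $1+\delta$ by Lemma~\ref{lemma_trace} and Theorem~\ref{fundamental_theorem}, while for the numerator the decomposition of $\hat{\bh}'_k,\hat{\bh}''_k$ together with Lemma~\ref{lemma_trace} and Lemma~\ref{lemma_zero} keeps only the common $\bh_k$-component, so that $\frac{1}{M}(\hat{\bh}''_k)^{\He}\bM\hat{\bh}'_k\asymp\sqrt{c_0'c_0''}\,\frac{1}{M}\trace(\bM)$ for $\bM=\bQ''_{[k]}\bA\bQ'_{[k]}$. Since $\frac{1}{M}\trace(\bQ''_{[k]}\bA\bQ'_{[k]})\asymp\frac{1}{M}\trace(\bA\bQ'\bQ'')\triangleq Z$ by rank-one stability (Lemma~\ref{lemma_rank1}), summing over $k$ with $K/M=1/\beta$ reduces the target to $X\asymp\frac{\sqrt{c_0'c_0''}}{\beta(1+\delta)^2}Z$.

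The crux is a deterministic equivalent for the two-resolvent trace $Z$. I would obtain it from $\alpha\bQ''=\I_M-\frac{1}{M}\sum_k\hat{\bh}''_k(\hat{\bh}''_k)^{\He}\bQ''$, which yields $\alpha Z=\frac{1}{M}\trace(\bA\bQ')-\frac{1}{M^2}\sum_k(\hat{\bh}''_k)^{\He}\bQ''\bA\bQ'\hat{\bh}''_k$. Expanding this last sum with the same leave-one-out machinery---now also removing the $k$th row of $\bQ'$, which produces a \emph{cross} quadratic form in $\hat{\bh}'_k,\hat{\bh}''_k$ and hence the correlation coefficient $c_0'c_0''$---gives $\frac{1}{M^2}\sum_k(\hat{\bh}''_k)^{\He}\bQ''\bA\bQ'\hat{\bh}''_k\asymp\frac{1+\delta-c_0'c_0''\delta}{\beta(1+\delta)^2}Z$. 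This closes into the linear equation $\alpha Z+\frac{1+\delta-c_0'c_0''\delta}{\beta(1+\delta)^2}Z\asymp\frac{1}{M}\trace(\bA\bQ')\asymp\frac{1}{M}\trace(\bA\bQ_o)$.

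Finally I would solve for $Z$ and substitute. Using the fixed-point characterization from Theorem~\ref{fundamental_theorem}, namely $\bQ_o=\delta\I_M$ and $\alpha=\frac{1}{\delta}-\frac{1}{\beta(1+\delta)}$ (so $\frac{1}{M}\trace(\bA\bQ_o)=\delta\,\frac{1}{M}\trace(\bA)$), the bracket collapses to $\frac{1}{\delta}-\frac{c_0'c_0''\delta}{\beta(1+\delta)^2}$, giving $X\asymp\frac{\sqrt{c_0'c_0''}\delta^2\,\frac{1}{M}\trace(\bA)/(\beta(1+\delta)^2)}{1-c_0'c_0''\delta^2/(\beta(1+\delta)^2)}$; the identity $(1-\delta)+\frac{\delta^2}{1+\delta}=\frac{1}{1+\delta}$ shows this equals the stated closed form, and the defining relation of $Y_0$ reproduces the same rational function (with $\bA=\I_M$ giving $\frac{1}{M}\trace(\I_M)=1$ and $X\asymp Y_0$). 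The main obstacle is precisely the two correlated resolvents in the $Z$-step: unlike the single-resolvent fundamental theorem or the same-resolvent lemma it extends, $\bQ'$ and $\bQ''$ are built from coupled matrices sharing the true channel $\bH$, so one must verify that $\bQ'_{[k]},\bQ''_{[k]}$ are jointly independent of the $k$th rows of both channels and must track the factors $\sqrt{c_0'c_0''}$ and $c_0'c_0''$ correctly, since these carry the entire price of distributedness.
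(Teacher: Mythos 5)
Your proposal is correct, reaches exactly the closed form of the lemma, and every intermediate claim checks out (I verified in particular that $\alpha+\frac{1+\delta-c_0'c_0''\delta}{\beta(1+\delta)^2}=\frac{1}{\delta}-\frac{c_0'c_0''\delta}{\beta(1+\delta)^2}$ and that your rational function matches $Y_0$ via $(1-\delta)+\frac{\delta^2}{1+\delta}=\frac{1}{1+\delta}$), but it takes a genuinely different route from the paper. The paper expands $\bQ'$ around the deterministic resolvent via $\bQ'-\bQ_o=\bQ_o\LB\frac{\I_M}{\beta(1+\delta)}-\frac{\bH'^{\He}\bH'}{M}\RB\bQ'$, splits the target into $Z_1+Z_2+Z_3$, peels rows inside each term, and---after $Z_2$ cancels against part of $Z_4$---ends up with a self-referential equation in the bilinear quantity $\frac{1}{M^2}\trace\LB\bQ'\bH'^{\He}\bH''\bQ''\RB$ itself, which it can only close in the special case $\bA=\I_M$ (yielding $Y_0$) before substituting back for general $\bA$. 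You instead peel the $k$th row out of \emph{both} resolvents at the outset, collapsing the target in one pass to $\frac{\sqrt{c_0'c_0''}}{\beta(1+\delta)^2}Z$ with $Z\triangleq\frac{1}{M}\trace(\bA\bQ'\bQ'')$, and then derive a linear equation for $Z$ from the elementary identity $\alpha\bQ''=\I_M-\frac{1}{M}\bH''^{\He}\bH''\bQ''$; the deterministic machinery enters only at the end through $\frac{1}{M}\trace(\bA\bQ')\asymp\delta\,\frac{1}{M}\trace(\bA)$ and $\alpha=\frac{1}{\delta}-\frac{1}{\beta(1+\delta)}$. Your route buys a one-shot general-$\bA$ result with no bootstrap through $\bA=\I_M$, a cleaner unknown (a co-resolvent trace rather than the bilinear form), and a more transparent final expression: the identity $1+\sqrt{c_0'c_0''}\,Y_0=\LB 1-\frac{c_0'c_0''\delta^2}{\beta(1+\delta)^2}\RB^{-1}$ shows directly that your formula coincides with the paper's bracketed one, and incidentally exposes a typo in the paper's intermediate display \eqref{eq:proof_lemma1_9}, where $\delta^2$ multiplies the whole factor $\LB 1+\sqrt{c_0'c_0''}\cdot(\text{trace term})\RB$ although the lemma statement (correctly) reads $\delta^2+\sqrt{c_0'c_0''}Y_0$. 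The mechanism carrying the correlation---decomposing $\hat{\bh}'_k,\hat{\bh}''_k$ over the common $\bh_k$ so that cross quadratic forms contribute $\sqrt{c_0'c_0''}$ by Lemmas~\ref{lemma_trace} and~\ref{lemma_zero}, with the leave-one-out resolvents jointly independent of the $k$th rows of both estimates---is the same in both proofs, and you flag it correctly as the crux. Two routine points you share with the paper rather than being gaps of your own: summing $K$ terms requires the almost-sure convergences to hold uniformly in $k$ (a standard moment-bound refinement the paper also omits), and solving for $Z$ requires $1-\frac{c_0'c_0''\delta^2}{\beta(1+\delta)^2}$ bounded away from zero, which holds since $c_0'c_0''\leq 1$, $\beta\geq 1$, and $\frac{\delta}{1+\delta}<1$.
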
  
\begin{proof}
We start by defining
\begin{align}
\bQ'_{\ell}\triangleq  \LB \frac{\bH_{\ell}' \bH_{\ell}'^{\He}}{M}+\alpha\I_{M} \RB^{-1}
\label{eq:proof_lemma1_1}
\end{align}
with 
\begin{align}
\bH_{\ell}'^{\He}\triangleq   \begin{bmatrix}
\bh'_1 &\hdots&\bh'_{\ell-1} &\bh'_{\ell+1} &\hdots&\bh'_K 
\end{bmatrix}.
\label{eq:proof_lemma1_2}
\end{align}
We then define similarly $\bQ''_{\ell}$ and $\bH_{\ell}''^{\He}$. Let us start by writing the simple equality
\begin{align}
\bQ'-\bQ_o&=\bQ_o \LB \bQ_o^{-1}-\bQ'^{-1}\RB \bQ'\\
&=\bQ_o \LB \frac{\I_K}{\beta \LB 1+\delta\RB } -\frac{\bH'^{\He}\bH'}{M} \RB \bQ
\label{eq:proof_lemma1_3}
\end{align}
We can then replace $\bQ'$ using \eqref{eq:proof_lemma1_3} to obtain
\begin{align}
&\frac{1}{M^2} \trace \LB\bA \bQ' \bH'^{\He}\bH''\bQ''\RB\\
&= \frac{1}{M^2} \trace \LB\bA \bQ_o \bH'^{\He}\bH''\bQ''\RB \!+\!\frac{\trace \LB\bA \bQ_o\bQ' \bH'^{\He}\bH''\bQ''\RB}{M^2\beta \LB 1+\delta\RB }  \!-\!\frac{1}{M^3} \trace \LB\bA \bQ_o \bH'^{\He}\bH' \bQ'\bH'^{\He}\bH''\bQ''\RB\\
&\triangleq Z_1 +Z_2+Z_3.
\label{eq:proof_lemma1_4}
\end{align}
We will now calculate separately each of the term~$Z_i$. Starting with~$Z_1$ gives 
\begin{align}
Z_1&= \frac{1}{M^2} \trace \LB\bA \bQ_o \bH'^{\He}\bH''\bQ''\RB\\
&= \frac{1}{M} \sum_{\ell=1}^K\frac{1}{M}\bh_{\ell}''^{\He} \bQ''\bA \bQ_o \bh'_{\ell}\\
&\stackrel{(a)}{=} \frac{1}{M} \sum_{\ell=1}^K \frac{1}{M}\frac{\bh_{\ell}''^{\He}\bQ_{\ell}''\bA \bQ_o \bh_{\ell}'}{1+\frac{1}{M}\bh_{\ell}''^{\He} \bQ_{\ell}''\bh_{\ell}''}\\ 
&\stackrel{(b)}{\asymp} \frac{1}{M} \sum_{\ell=1}^K \frac{\sqrt{c_0'c_0''} \frac{1}{M}\trace \LB \bQ_{\ell}''\bA \bQ_o \RB}{1+\frac{1}{M}\trace \LB \bQ_{\ell}'' \RB}\\
&\stackrel{(c)}{\asymp} \frac{1}{M} \sum_{\ell=1}^K \frac{\sqrt{c_0'c_0''} \frac{1}{M}\trace \LB \bQ''\bA \bQ_o \RB}{1+\frac{1}{M}\trace \LB \bQ'' \RB}\\ 
&\stackrel{(d)}{\asymp} \frac{1}{\beta} \frac{\sqrt{c_0'c_0''} \delta^2\frac{1}{M}\trace \LB\bA \RB}{1+\delta}.
\label{eq:proof_lemma1_5}
\end{align} 
where equality~$(a)$ follows from Lemma~\ref{lemma_resolvent}, equality~$(b)$ from Lemma~\ref{lemma_trace}, equality~$(c)$ from Lemma~\ref{lemma_rank1}, and equality~$(d)$ from the fundamental Theorem \ref{fundamental_theorem}. The following calculations are very similar and the same lemmas are used in the same way such that we will omit to mention explicitly the lemmas used.
  
Turning to $Z_3$ gives
\begin{align}
Z_3&=-\frac{1}{M^3} \trace \LB\bA \bQ_o \bH'^{\He}\bH' \bQ'\bH'^{\He}\bH''\bQ''\RB\\
&=-\frac{1}{M^3} \sum_{\ell=1}^K \trace \LB\bh_{\ell}'^{\He} \bQ'\bH'^{\He}\bH''\bQ''\bA \bQ_o \bh_{\ell}'\RB\\
&=-\frac{1}{M^3} \sum_{\ell=1}^K \frac{\trace \LB\bh_{\ell}'^{\He} \bQ_{\ell}'\bH'^{\He}\bH''\bQ''\bA \bQ_o \bh_{\ell}'\RB}{1+\frac{1}{M}\bh_{\ell}'^{\He} \bQ_{\ell}'\bh_{\ell}'}\\
&\stackrel{(a)}{=}-\frac{1}{M^3} \sum_{\ell=1}^K \frac{\trace \LB\bh_{\ell}'^{\He} \bQ_{\ell}'\bH'^{\He}\bH''\bQ_{\ell}''\bA \bQ_o \bh_{\ell}'\RB}{1+\frac{1}{M}\bh_{\ell}'^{\He} \bQ_{\ell}'\bh_{\ell}'}+\frac{1}{M^4} \sum_{\ell=1}^K \frac{\trace \LB\bh_{\ell}'^{\He} \bQ_{\ell}'\bH'^{\He}\bH''\bQ_{\ell}''\bh_{\ell}''\bh_{\ell}''^{\He}\bQ_{\ell}''\bA \bQ_o \bh_{\ell}'\RB}{\LB 1+\frac{1}{M}\bh_{\ell}'^{\He} \bQ_{\ell}'\bh_{\ell}'\RB\LB 1+\frac{1}{M}\bh_{\ell}''^{\He} \bQ_{\ell}''\bh_{\ell}''\RB}\\
&\triangleq Z_4+Z_5
\label{eq:proof_lemma1_6}
\end{align}
with equality~$(a)$ obtained using Lemma~\ref{lemma_resolvent}. We also split the calculation in two and start by calculating $Z_4$ as follows.
\begin{align}
Z_4&=-\frac{1}{M^3} \sum_{\ell=1}^K \frac{\trace \LB\bh_{\ell}'^{\He} \bQ_{\ell}'\bH_{\ell}'^{\He}\bH_{\ell}''\bQ_{\ell}''\bA \bQ_o \bh_{\ell}'\RB}{1+\frac{1}{M}\bh_{\ell}'^{\He} \bQ_{\ell}'\bh_{\ell}'}-\frac{1}{M^3} \sum_{\ell=1}^K \frac{\trace \LB\bh_{\ell}'^{\He} \bQ_{\ell}'\bh_{\ell}'\bh_{\ell}''^{\He}\bQ_{\ell}''\bA \bQ_o \bh_{\ell}'\RB}{1+\frac{1}{M}\bh_{\ell}'^{\He} \bQ_{\ell}'\bh_{\ell}'}\\
&=-\frac{1}{M^3} \sum_{\ell=1}^K \frac{\trace \LB\bQ_{\ell}'\bH_{\ell}'^{\He}\bH_{\ell}''\bQ_{\ell}''\bA \bQ_o\RB}{1+\frac{1}{M}\trace \LB\bQ_{\ell}'\RB}-\frac{1}{M} \sum_{\ell=1}^K \sqrt{c_0'c_0''}\frac{\frac{1}{M}\trace \LB \bQ_{\ell}'\RB \frac{1}{M} \trace \LB\bQ_{\ell}''\bA \bQ_o\RB}{1+\frac{1}{M}\trace \LB\bQ_{\ell}'\RB}\\
&\asymp -\frac{K}{M}  \frac{\frac{1}{M^2}\trace \LB\bQ' \bH'^{\He}\bH''\bQ''\bA \bQ_o\RB}{1+\frac{1}{M}\trace \LB\bQ'\RB}-\frac{K}{M} \sqrt{c_0'c_0''}\frac{\frac{1}{M}\trace \LB \bQ'\RB \frac{1}{M} \trace \LB\bQ''\bA \bQ_o\RB}{1+\frac{1}{M}\trace \LB\bQ'\RB}\\ 
&\asymp 	-Z_2-\frac{\delta \sqrt{c_0'c_0''}}{\beta} \frac{\delta^2\frac{1}{M}\trace \LB\bA\RB}{1+\delta}.
\label{eq:proof_lemma1_6}
\end{align}
Finally, it remains to calculate $Z_5$ as  
\begin{align}
Z_5&\asymp \frac{1}{M^4} \sum_{\ell=1}^K \frac{\trace \LB\bh_{\ell}'^{\He} \bQ_{\ell}'\bH'^{\He}\bH''\bQ_{\ell}''\bh_{\ell}''\bh_{\ell}''^{\He}\bQ_{\ell}''\bA \bQ_o \bh_{\ell}'\RB}{\LB 1+ \delta\RB^2}\\
&\asymp \frac{1}{M^4} \sum_{\ell=1}^K \frac{\trace \LB\bh_{\ell}'^{\He} \bQ_{\ell}'\bH_{\ell}'^{\He}\bH_{\ell}''\bQ_{\ell}''\bh_{\ell}''\bh_{\ell}''^{\He}\bQ_{\ell}''\bA \bQ_o \bh_{\ell}'\RB}{\LB 1+ \delta\RB^2}\notag\\
&~~~~~~~~~~~~~~~~~~~~~~~~~~~~~~+\frac{1}{M^4} \sum_{\ell=1}^K \frac{\trace \LB\bh_{\ell}'^{\He} \bQ_{\ell}'\bh_{\ell}'^{\He}\bh_{\ell}''\bQ_{\ell}''\bh_{\ell}''\bh_{\ell}''^{\He}\bQ_{\ell}''\bA \bQ_o \bh_{\ell}'\RB}{\LB 1+ \delta\RB^2}\\
&\asymp \frac{1}{M} \sum_{\ell=1}^K c_0'c_0''\frac{\frac{1}{M^2}\trace \LB \bQ_{\ell}'\bH_{\ell}'^{\He}\bH_{\ell}''\bQ_{\ell}''\RB \frac{1}{M} \trace \LB \bQ_{\ell}''\bA \bQ_o\RB}{\LB 1+ \delta\RB^2}\notag\\
& ~~~~~~~~~~~~~~~~~~~~~~~~~~~~~~+\frac{1}{M} \sum_{\ell=1}^K\sqrt{c_0'c_0''}\frac{\frac{1}{M}	\trace \LB \bQ_{\ell}'\RB \frac{1}{M}	\trace \LB \bQ_{\ell}''\RB \frac{1}{M} \trace \LB \bQ_{\ell}''\bA \bQ_o\RB}{\LB 1+ \delta\RB^2}\\
&\asymp \frac{1}{M} \sum_{\ell=1}^K c_0'c_0''\frac{\frac{1}{M^2}\trace \LB \bQ_{\ell}'\bH_{\ell}'^{\He}\bH_{\ell}''\bQ_{\ell}''\RB \frac{1}{M} \trace \LB \bQ_o\bA \bQ_o\RB}{\LB 1+ \delta\RB^2} +\frac{K}{M} \sqrt{c_0'c_0''}\frac{  \delta^2 \frac{1}{M} \trace \LB \bQ_o\bA \bQ_o\RB}{\LB 1+ \delta\RB^2}\\
&\asymp \frac{1}{\beta}  c_0'c_0''\frac{\frac{1}{M^2}\trace \LB \bQ_{\ell}'\bH_{\ell}'^{\He}\bH_{\ell}''\bQ_{\ell}''\RB  \delta^2\frac{1}{M} \trace \LB\bA\RB}{\LB 1+ \delta\RB^2} +\frac{1}{\beta} \frac{ \sqrt{c_0'c_0''} \delta^4 \frac{1}{M} \trace \LB \bA \RB}{\LB 1+ \delta\RB^2}.
\label{eq:proof_lemma1_8}
\end{align}
Adding all the $Z_i$ gives
\begin{align}
&\frac{1}{M^2} \trace \LB\bA \bQ' \bH'^{\He}\bH''\bQ''\RB\\
&\asymp\LB \frac{1}{\beta} \frac{\sqrt{c_0'c_0''} \delta^2}{1+\delta}-\frac{\delta \sqrt{c_0'c_0''}}{\beta} \frac{\delta^2}{1+\delta}+  \frac{c_0'c_0''\frac{1}{M^2}\trace \LB \bQ_{\ell}'\bH_{\ell}'^{\He}\bH_{\ell}''\bQ_{\ell}''\RB  \delta^2}{\beta\LB 1+ \delta\RB^2}+ \frac{\sqrt{c_0'c_0''}  \delta^4}{\beta\LB 1+ \delta\RB^2}\RB \frac{\trace \LB \bA \RB}{M} \\
&\asymp \frac{\sqrt{c_0c_0'}\delta^2\frac{1}{M} \trace \LB \bA \RB}{\beta (1+\delta)}  \LSB  (1-\delta) + \frac{  \delta^2 \LB 1+\sqrt{c_0'c_0''}\frac{1}{M^2}\trace \LB \bQ_{\ell}'\bH_{\ell}'^{\He}\bH_{\ell}''\bQ_{\ell}''\RB \RB  }{\LB 1+ \delta\RB}\RSB
\label{eq:proof_lemma1_9}
\end{align}
It remains then to calculate the case~$\bA=\bI_{K}$ to conclude the calculation. In that case, we have
\begin{align}
 \LB 1- \frac{c_0'c_0''\delta^2}{\beta\LB 1+ \delta\RB^2}   \RB \frac{1}{M^2} \trace \LB\bQ' \bH'^{\He}\bH''\bQ''\RB&\asymp\frac{\sqrt{c_0'c_0''}\delta^2}{\beta (1+\delta)}  \LSB  (1-\delta) + \frac{  \delta^2  }{\LB 1+ \delta\RB}\RSB
\label{eq:proof_lemma1_10}
\end{align}
%%%%%%%%%%%%%%&=\frac{1}{\beta} \frac{\sqrt{c_0'c_0''} \delta^2}{1+\delta}-\frac{\delta \sqrt{c_0'c_0''}}{\beta} \frac{\delta^2}{1+\delta}+\frac{1}{\beta} \frac{\sqrt{c_0'c_0''}  \delta^2 \frac{1}{M} \trace \LB \bT' \bT\RB}{\LB 1+ \delta\RB^2}\\
Hence,	
\begin{align}
\frac{1}{M^2} \trace \LB\bQ' \bH'^{\He}\bH''\bQ''\RB&\asymp \frac {\frac{ \sqrt{c_0'c_0''}\delta^2}{\beta (1+\delta)}  \LSB  (1-\delta) + \frac{  \delta^2  }{\LB 1+ \delta\RB}\RSB}{ \LB 1- \frac{c_0'c_0''\delta^2}{\beta\LB 1+ \delta\RB^2}   \RB }=Y_o.
\label{eq:proof_lemma1_11}
\end{align}
Inserting \eqref{eq:proof_lemma1_11} inside \eqref{eq:proof_lemma1_10} concludes the proof.
\end{proof}

\begin{lemma}
\label{lemma2}
Let $\bL,\bR,\bar{\bA}\in \mathbb{C}^{M \times M}$ be of uniformly bounded spectral norm with respect to $M$ and let $\bar{\bA}$ be invertible. Let $\xv,\yv$ have i.i.d. complex entries of zero mean, finite variance and finite $8$th order moment and be mutually independent as well as independent of $\bL,\bR,\bar{\bA}$. Then we have:
\begin{align*}
\frac{\xv^{\He}\bL\bA^{-1}\bR\xv}{M} &\asymp u_{\mathrm{LR}}-c_0u_{\mathrm{L}}u_{\mathrm{R}}\frac{1+c_1u}{1+u} +c_2^2u_{\mathrm{L}}u_{\mathrm{R}}\frac{u}{1+u}\\
\frac{\xv^{\He}\bL\bA^{-1}\bR\yv}{M} &\asymp u_{\mathrm{LR}}+c_1c_2u_{\mathrm{L}}u_{\mathrm{R}}\frac{u}{1+u} -c_2 u_{\mathrm{L}}u_{\mathrm{R}}\frac{1+c_1 u}{1+u}
\end{align*}
with
\begin{align*}
\bA&=\bar{\bA}+c_0\xv\xv^{\He}+c_1\yv\yv^{\He}+c_2\xv\yv^{\He}+c_2\yv\xv^{\He}
\end{align*}
with~$c_0+c_1=1$ and $c_0c_1-c_2^2=0$, and 
\begin{equation*}
u \triangleq\frac{\trace( \bar{\bA}^{-1})}{M},\quad u_{\mathrm{L}}\triangleq\frac{\trace(\bL \bar{\bA}^{-1})}{M},\quad u_{\mathrm{R}}\triangleq\frac{\trace(\bar{\bA}^{-1}\bR )}{M},\quad u_{\mathrm{LR}}\triangleq\frac{\trace(\bL\bar{\bA}^{-1}\bR )}{M}.
\end{equation*} 
\end{lemma}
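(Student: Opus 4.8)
The plan is to collapse the rank-two perturbation defining $\bA$ into a rank-one one and then apply the Sherman--Morrison identity together with the trace Lemmas~\ref{lemma_trace} and~\ref{lemma_zero}. First I would note that the two hypotheses $c_0+c_1=1$ and $c_0c_1-c_2^2=0$ say exactly that the $2\times2$ coefficient matrix $\bm{\Phi}\triangleq\left[\begin{smallmatrix}c_0&c_2\\ c_2&c_1\end{smallmatrix}\right]$ has trace $1$ and determinant $0$, hence eigenvalues $1$ and $0$ and rank one. Because $c_0c_1=c_2^2\geq0$ together with $c_0+c_1=1$ forces $c_0,c_1\geq0$, we may factor $\bm{\Phi}=\bw\bw^{\Tr}$ with $\bw\triangleq[\sqrt{c_0},\,\sign(c_2)\sqrt{c_1}]^{\Tr}$, so that
\begin{equation*}
c_0\xv\xv^{\He}+c_1\yv\yv^{\He}+c_2\xv\yv^{\He}+c_2\yv\xv^{\He}=\bz\bz^{\He},\qquad \bz\triangleq\sqrt{c_0}\,\xv+\sign(c_2)\sqrt{c_1}\,\yv .
\end{equation*}
Thus $\bA=\bar{\bA}+\bz\bz^{\He}$ is a genuine rank-one update of $\bar{\bA}$, at the mild price that $\bz$ is correlated with both $\xv$ and $\yv$.

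Next I would invoke Sherman--Morrison,
\begin{equation*}
\bA^{-1}=\bar{\bA}^{-1}-\frac{\bar{\bA}^{-1}\bz\bz^{\He}\bar{\bA}^{-1}}{1+\bz^{\He}\bar{\bA}^{-1}\bz},
\end{equation*}
and substitute it into each of the two forms, which produces an unperturbed term plus a single rank-one correction. Every scalar that appears is a quadratic or bilinear form of $\xv$ or $\yv$ against a matrix of uniformly bounded spectral norm, so Lemmas~\ref{lemma_trace} and~\ref{lemma_zero} apply after expanding $\bz$ back into its $\xv$ and $\yv$ components. By Lemma~\ref{lemma_trace} the matched forms concentrate on their normalized traces, giving $\bz^{\He}\bar{\bA}^{-1}\bz\asymp u$ (here $c_0+c_1=1$ is what makes the two diagonal pieces add to $u$), $\xv^{\He}\bL\bar{\bA}^{-1}\xv\asymp u_{\mathrm{L}}$, $\xv^{\He}\bar{\bA}^{-1}\bR\xv\asymp u_{\mathrm{R}}$ and $\xv^{\He}\bL\bar{\bA}^{-1}\bR\xv\asymp u_{\mathrm{LR}}$, whereas by Lemma~\ref{lemma_zero} every form coupling the two independent vectors, such as $\xv^{\He}\bL\bar{\bA}^{-1}\yv$, is asymptotically negligible and drops out. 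Collecting only the surviving pieces yields $\xv^{\He}\bL\bar{\bA}^{-1}\bz\asymp\sqrt{c_0}\,u_{\mathrm{L}}$, $\bz^{\He}\bar{\bA}^{-1}\bR\xv\asymp\sqrt{c_0}\,u_{\mathrm{R}}$ and $\bz^{\He}\bar{\bA}^{-1}\bR\yv\asymp\sign(c_2)\sqrt{c_1}\,u_{\mathrm{R}}$, so the corrections equal $c_0u_{\mathrm{L}}u_{\mathrm{R}}/(1+u)$ and $\sign(c_2)\sqrt{c_0c_1}\,u_{\mathrm{L}}u_{\mathrm{R}}/(1+u)=c_2u_{\mathrm{L}}u_{\mathrm{R}}/(1+u)$ respectively; the unperturbed term is $u_{\mathrm{LR}}$ for the first (matched) form and is asymptotically $0$ by Lemma~\ref{lemma_zero} for the second (cross) form.

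It then remains only to recognize these compact expressions inside the symmetric forms stated in the Lemma, which is pure algebra driven by $c_0c_1-c_2^2=0$:
\begin{equation*}
-c_0u_{\mathrm{L}}u_{\mathrm{R}}\frac{1+c_1u}{1+u}+c_2^2u_{\mathrm{L}}u_{\mathrm{R}}\frac{u}{1+u}=-\frac{u_{\mathrm{L}}u_{\mathrm{R}}}{1+u}\bigl(c_0+(c_0c_1-c_2^2)u\bigr)=-\frac{c_0u_{\mathrm{L}}u_{\mathrm{R}}}{1+u},
\end{equation*}
and likewise $c_1c_2u_{\mathrm{L}}u_{\mathrm{R}}\tfrac{u}{1+u}-c_2u_{\mathrm{L}}u_{\mathrm{R}}\tfrac{1+c_1u}{1+u}=-c_2u_{\mathrm{L}}u_{\mathrm{R}}/(1+u)$, so both claims follow. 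I expect the real difficulty to lie not in this final algebra but in the rigorous control of the error terms: one must justify that a product of two forms, each matched only up to a vanishing fluctuation, still converges to the product of their deterministic equivalents, and that the cross fluctuations (which are $o(1)$ but not identically zero) do not survive once multiplied by the trace-order factors and divided by the denominator $1+\bz^{\He}\bar{\bA}^{-1}\bz$. Exactly as in Lemma~\ref{lemma_c_0}, this is handled by bounding the denominator away from zero and using the almost-sure boundedness of $\|\bL\|$, $\|\bR\|$ and $\|\bar{\bA}^{-1}\|$, which keeps all fluctuations at lower order.
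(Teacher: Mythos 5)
Your proposal is correct, and it reaches the result by a genuinely different route than the paper. The paper never collapses the perturbation: it writes $\bA^{-1}-\bar{\bA}^{-1}=\bA^{-1}\LB\bar{\bA}-\bA\RB\bar{\bA}^{-1}$, uses Lemmas~\ref{lemma_zero} and~\ref{lemma_trace} to replace the right-hand factors $\frac{1}{M}\xv^{\He}\bar{\bA}^{-1}\bR\xv$ (resp.\ with $\yv$) by $u_{\mathrm{R}}$ (resp.\ by $0$), and then invokes the literature result Lemma~\ref{lemma_c_0} to evaluate the surviving quadratic forms $\frac{1}{M}\xv^{\He}\bL\bA^{-1}\xv$ and $\frac{1}{M}\xv^{\He}\bL\bA^{-1}\yv$ taken against the \emph{perturbed} inverse --- that is where the factors $\frac{1+c_1u}{1+u}$ and $\frac{u}{1+u}$ in the stated equivalents come from. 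Your observation that $c_0+c_1=1$ and $c_0c_1-c_2^2=0$ make the coefficient matrix rank one, so that the perturbation is exactly $\bz\bz^{\He}$ with $\bz=\sqrt{c_0}\,\xv+\sign(c_2)\sqrt{c_1}\,\yv$, lets you do everything with a single Sherman--Morrison step plus Lemmas~\ref{lemma_trace} and~\ref{lemma_zero}, yielding the equivalents in the pre-simplified form $u_{\mathrm{LR}}-c_0u_{\mathrm{L}}u_{\mathrm{R}}/(1+u)$ and $-c_2u_{\mathrm{L}}u_{\mathrm{R}}/(1+u)$; your algebraic check that these coincide with the stated expressions under $c_0c_1=c_2^2$ is exactly right. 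Your route buys two things. First, it avoids Lemma~\ref{lemma_c_0} entirely; note that the hypothesis $c_0c_1-c_2^2>0$ of that lemma is in fact \emph{violated} here (it equals zero), so the paper applies it at the boundary of its stated validity, whereas your rank-one argument is self-contained precisely in this degenerate case. Second, it makes transparent that the unperturbed cross term $\frac{1}{M}\xv^{\He}\bL\bar{\bA}^{-1}\bR\yv$ vanishes by Lemma~\ref{lemma_zero}, so the additive $u_{\mathrm{LR}}$ appearing in the second displayed equivalent of the lemma statement cannot be there: your derivation gives $-c_2u_{\mathrm{L}}u_{\mathrm{R}}/(1+u)$ with no $u_{\mathrm{LR}}$, and this is corroborated by the paper itself, whose proof only establishes the difference $\frac{1}{M}\xv^{\He}\bL\bA^{-1}\bR\yv-\frac{1}{M}\xv^{\He}\bL\bar{\bA}^{-1}\bR\yv$ and whose applications (the terms $C$ and $E$ in Section~\ref{se:proof}) use the cross-form equivalent without any standalone $u_{\mathrm{LR}}$ contribution --- so the discrepancy you silently passed over is a typo in the lemma statement, not a gap in your argument, and it would be worth stating explicitly. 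Your closing remarks on error control (products of almost-surely bounded concentrating forms concentrate on the product of their limits; the denominator satisfies $1+\bz^{\He}\bar{\bA}^{-1}\bz\geq 1$ since $\bar{\bA}$ is Hermitian positive definite in every application) handle the only technical points at the same level of rigor as the paper, which likewise glosses the normalization of $\xv,\yv$ relative to the $1/M$ scalings.
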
 

\begin{proof}
Focusing first on the first equality gives
\begin{align}
&\frac{1}{M}\xv^{\He}\bL\bA^{-1}\bR\xv-\frac{1}{M}\xv^{\He}\bL\bar{\bA}^{-1}\bR\xv\\
&=\frac{1}{M}\xv^{\He}\bL\bA^{-1}\LB\bar{\bA}-\bA \RB   \bar{\bA}^{-1}\bR\xv \\
&=-\frac{1}{M^2}\xv^{\He}\bL\bA^{-1}\LB c_0 \xv \xv^{\He}+c_1\yv \yv^{\He}+c_2 \yv \xv^{\He}+c_2\xv \yv^{\He} \RB   \bar{\bA}^{-1}\bR\xv \\
&\stackrel{(a)}{\asymp}- \frac{1}{M}\LB  c_0\xv^{\He}\bL\bA^{-1}\xv +c_2 \xv^{\He}\bL\bA^{-1}\yv \RB\frac{\trace \LB \bar{\bA}^{-1}\bR\RB}{M} \\
&\stackrel{(b)}{\asymp}- c_0 \frac{\trace \LB \bL\bar{\bA}^{-1}\RB}{M} \frac{\trace \LB \bar{\bA}^{-1}\bR\RB}{M} \frac{1+c_1\frac{\trace \LB \bar{\bA}^{-1}\RB}{M}}{1+\frac{\trace \LB \bar{\bA}^{-1}\RB}{M}}  +c_2^2 \frac{\trace \LB \bL\bar{\bA}^{-1}\RB}{M} \frac{\trace \LB \bar{\bA}^{-1}\bR\RB}{M}\frac{\frac{\trace \LB \bar{\bA}^{-1}\RB}{M}}{1+\frac{\trace \LB \bar{\bA}^{-1}\RB}{M}}
\end{align}
where equality $(a)$ is obtained from using Lemma~\ref{lemma_zero} and Lemma~\ref{lemma_trace} and equality~$(b)$ follows from Lemma~\ref{lemma_c_0}.
Similarly, we turn to the second equality to write
\begin{align}
&\frac{1}{M}\xv^{\He}\bL\bA^{-1}\bR\yv-\frac{1}{M}\xv^{\He}\bL\bar{\bA}^{-1}\bR\yv\\
=&\frac{1}{M}\xv^{\He}\bL\bA^{-1}\LB\bar{\bA}-\bA \RB   \bar{\bA}^{-1}\bR\yv \\
=&-\frac{1}{M^2}\xv^{\He}\bL\bA^{-1}\LB c_0 \xv \xv^{\He}+c_1\yv \yv^{\He}+c_2 \yv \xv^{\He}+c_2\xv \yv^{\He} \RB   \bar{\bA}^{-1}\bR\yv \\
\stackrel{(a)}{\asymp}&-\frac{1}{M}\LB c_1\xv^{\He}\bL\bA^{-1}\yv  +c_2\xv^{\He}\bL\bA^{-1}\xv   \RB   \frac{\trace\LB \bar{\bA}^{-1}\bR\RB}{M} \\
\stackrel{(b)}{\asymp}& c_1  c_2 \frac{\trace \LB \bL\bar{\bA}^{-1}\RB}{M} \frac{\trace \LB \bar{\bA}^{-1}\bR\RB}{M} \frac{\frac{\trace \LB \bar{\bA}^{-1}\RB}{M}}{1+\frac{\trace \LB \bar{\bA}^{-1}\RB}{M}}  -c_2     \frac{\trace \LB \bL\bar{\bA}^{-1}\RB}{M} \frac{\trace \LB \bar{\bA}^{-1}\bR\RB}{M}\frac{1+c_1\frac{\trace \LB \bar{\bA}^{-1}\RB}{M}}{1+\frac{\trace \LB \bar{\bA}^{-1}\RB}{M}}
\end{align}
where equality $(a)$ is obtained from using Lemma~\ref{lemma_zero} and Lemma~\ref{lemma_trace} and equality~$(b)$ follows from Lemma~\ref{lemma_c_0}.
\end{proof}

%%%%%%%%%%%%%%%%%%%%%%%%%%%%%%%%%%%%%%%%%%%%%%%%%%%%%%%%%%%%%%%%%%%%%%%%%%%%%%%%%%%%%%%%%%%%%%%%%%%%%%%%%%%%%%%%%%%%%%
%%%%%%%%%%%%%%%%%%%%%%%%%%%%%%%%%%%%%%%%%%%%%%%%%%%%%%%%%%%%%%%%%%%%%%%%%%%%%%%%%%%%%%%%%%%%%%%%%%%%%%%%%%%%%%%%%%%%%%
%%%%%%%%%%%%%%%%%%%%%%%%%%%%%%%%%%%%%%%%%%%%%%%%%%%%%%%%%%%%%%%%%%%%%%%%%%%%%%%%%%%%%%%%%%%%%%%%%%%%%%%%%%%%%%%%%%%%%%
%%%%%%%%%%%%%%%%%%%%%%%%%%%%%%%%%%%%%%%%%%%%%%%%%%%%%%%%%%%%%%%%%%%%%%%%%%%%%%%%%%%%%%%%%%%%%%%%%%%%%%%%%%%%%%%%%%%%%%
%%%%%%%%%%%%%%%%%%%%%%%%% END SECOND LEMMA     %%%%%%%%%%%%%%%%%%%%%%%%%%%%%%%%%%%%%%%%%%%%%%%%%%%%%%%%%%%%%%%%%%%%%%%
%%%%%%%%%%%%%%%%%%%%%%%%%%%%%%%%%%%%%%%%%%%%%%%%%%%%%%%%%%%%%%%%%%%%%%%%%%%%%%%%%%%%%%%%%%%%%%%%%%%%%%%%%%%%%%%%%%%%%%
%%%%%%%%%%%%%%%%%%%%%%%%%%%%%%%%%%%%%%%%%%%%%%%%%%%%%%%%%%%%%%%%%%%%%%%%%%%%%%%%%%%%%%%%%%%%%%%%%%%%%%%%%%%%%%%%%%%%%%
%%%%%%%%%%%%%%%%%%%%%%%%%%%%%%%%%%%%%%%%%%%%%%%%%%%%%%%%%%%%%%%%%%%%%%%%%%%%%%%%%%%%%%%%%%%%%%%%%%%%%%%%%%%%%%%%%%%%%%
%%%%%%%%%%%%%%%%%%%%%%%%%%%%%%%%%%%%%%%%%%%%%%%%%%%%%%%%%%%%%%%%%%%%%%%%%%%%%%%%%%%%%%%%%%%%%%%%%%%%%%%%%%%%%%%%%%%%%%

\bibliographystyle{IEEEtran}
%%%\bibliography{./../Literature}
\bibliography{Literature}
\end{document}